\DeclareMathOperator*{\minimize}{minimize}
\DeclareMathOperator*{\subjectto}{subject\ to}
\DeclareMathOperator{\Blkdiag}{Blkdiag}
\DeclareMathOperator{\Diag}{Diag}
\DeclareMathAlphabet\mathbfcal{OMS}{cmsy}{b}{n}
\newtheorem{theorem}{Theorem}
\newtheorem{mydef}{Definition}
\newcommand{\mat}[1]{\boldsymbol{#1}}
\newcommand{\bmat}[1]{\begin{bmatrix} #1 \end{bmatrix}}
\providecommand{\mA}{\ensuremath{\mat{A}}}
\providecommand{\mB}{\ensuremath{\mat{B}}}
\providecommand{\mC}{\ensuremath{\mat{C}}}
\providecommand{\mD}{\ensuremath{\mat{D}}}
\providecommand{\mE}{\ensuremath{\mat{E}}}
\providecommand{\mF}{\ensuremath{\mat{F}}}
\providecommand{\mG}{\ensuremath{\mat{G}}}
\providecommand{\mI}{\ensuremath{\mat{I}}}
\providecommand{\mL}{\ensuremath{\mat{L}}}
\providecommand{\mM}{\ensuremath{\mat{M}}}
\providecommand{\mN}{\ensuremath{\mat{N}}}
\providecommand{\mO}{\ensuremath{\mat{O}}}
\providecommand{\mP}{\ensuremath{\mat{P}}}
\providecommand{\mQ}{\ensuremath{\mat{Q}}}
\providecommand{\mR}{\ensuremath{\mat{R}}}
\providecommand{\mX}{\ensuremath{\mat{X}}}
\providecommand{\mY}{\ensuremath{\mat{Y}}}
\providecommand{\mZ}{\ensuremath{\mat{Z}}}
\newcommand{\m}{\boldsymbol}
\newcommand{\mc}[1]{\mathcal{#1}}
\newcommand{\mbb}[1]{\mathbb{#1}}
\newcommand{\mr}[1]{\mathrm{#1}}
\DeclarePairedDelimiter\abs{\lvert}{\rvert}%
\DeclarePairedDelimiter\norm{\lVert}{\rVert}%
\let\oldabs\abs
\def\abs{\@ifstar{\oldabs}{\oldabs*}}
\let\oldnorm\norm
\def\norm{\@ifstar{\oldnorm}{\oldnorm*}}
\title{\centering \Huge {Dynamic State Estimation of Nonlinear Differential Algebraic Equation Models of Power Networks}}
\author{Muhammad Nadeem$^\star$, Sebastian A. Nugroho$^\dagger$, and Ahmad F. Taha$^\star$
\thanks{$^\star$Civil and Environmental Engineering Department, Vanderbilt University, 2201 West End Ave, Nashville, Tennessee 37235. $^\dagger$Department of Electrical Engineering and Computer Science University of Michigan, Ann Arbor
Emails: muhammad.nadeem@vanderbilt.edu, ahmad.taha@vanderbilt.edu, snugroho@umich.edu. This work is supported by National Science Foundation Grants 2151571 and 2152450.}
}
\begin{document}

\newdimen\origiwspc%
\newdimen\origiwstr%
\origiwspc=\fontdimen2\font% original inter word space
\origiwstr=\fontdimen3\font% original inter word stretch

\fontdimen2\font=0.63ex

\maketitle
\markboth{IEEE Transactions on Power Systems, In Press, June 2022}{}

\begin{abstract}
This paper investigates the joint problems of dynamic state estimation of algebraic variables (voltage and phase angle) and generator states (rotor angle and frequency) of nonlinear differential algebraic equation (NDAE) power network models, under uncertainty. Traditionally, these two problems have been decoupled due to complexity of handling NDAE models.  In particular, this paper offers the first attempt to solve the aforementioned problem in a coupled approach where the algebraic and generator states estimates are simultaneously computed. The proposed estimation algorithm herein is endowed with the following properties: \textit{(i)} it is fairly simple to implement and based on well-understood Lyapunov theory; \textit{(ii)} considers various sources of uncertainty from generator control inputs, loads, renewables, process and measurement noise; \textit{(iii)} models phasor measurement unit installations at arbitrary buses; and \textit{(iv)} is computationally less intensive than the decoupled approach in the literature.  
 \end{abstract}

%\vspace{-0.1cm}
\begin{IEEEkeywords}
Robust algebraic and dynamic state estimation, Lyapunov stability criteria, $H_\infty$ stability, power system nonlinear differential algebraic model, Lipschitz continuity.  
\end{IEEEkeywords}

\section{Introduction}\label{section:intro}
\IEEEPARstart{A}{s} human-made climate change is necessitating increased penetration of fuel-free energy resources, monitoring and realtime control of the power grid transients have become more complex. This is due to the fact that controlling and predicting mostly non-dispatchable intermittent and uncertain renewable energy (e.g., solar and wind) results in a more challenging control problem of dispatchable generators (e.g., synchronous machines) in the short, second-to-second time-scale. 

An essential component of feedback control in power networks is dynamic state estimation (DSE) which accurately estimates the grid's physical states in realtime. Typically, DSE uses two essential components. The first component is a physics-based model of the electromechanical transients depicted via a set of nonlinear differential algebraic equations ~\cite{sauer2017power}
\begin{subequations}~\label{equ:PSModel}
	\begin{align}
\textit{machine dynamics:} \;\;\;\;\;	\dot{\m x}(t) &= \m f(\m x, \m a, \m u, \m w)  ~\label{equ:PSModel-a} \\
\textit{algebric constraints:} \;\;\;\;\;	\m 0 &= \m h(\m x, \m a, \m w). ~\label{equ:PSModel-b}
	\end{align}
\end{subequations}
In~\eqref{equ:PSModel}, vectors $\m x(t), \m a(t), \m u(t)$, and $\m w(t)$ depict the states of all generators (e.g., angle and frequency), algebraic variables of all nodes (e.g., voltages and power flows), controllable inputs  (e.g., exciter field voltage and mechanical input power), and the uncontrollable inputs from loads, renewables, and network's parameters respectively. The differential equations~\eqref{equ:PSModel-a} model the transients of the generators while the algebraic equations~\eqref{equ:PSModel-b} depict the network power flow constraints. Vintage power system textbooks~\cite{sauer2017power,kundur2007power} delineate this NDAE model in great detail while showcasing the vector-valued nonlinearities abstracted through $\m f(\cdot)$ and $\m h(\cdot)$.

The second component needed in DSE is a stream of data representing high sampling rate, synchronized measurements $\m y(t)$ from phasor measurement units (PMUs). These measurements can be modeled as
\begin{equation}~\label{equ:PMUModel}
	\m y(t) = \m g(\m x, \m a, \m v)
\end{equation} 
where $\m g(\cdot)$ is a nonlinear vector-valued function that maps both the dynamic and algebraic variables to what PMUs typically measure and $\m v(t)$ encapsulates measurement noise.
 
To that end, it is natural to ask about how to perform dynamic state estimation for the complete model of a power system depicted in \eqref{equ:PSModel} using PMUs measurement model \eqref{equ:PMUModel} without jeopardizing system accuracy. In this paper we propose a unique way by developing a simple method to estimate both algebraic and dynamics state of power systems simultaneously via well-understood Lyapunov and control theory. 

Extensive research has been carried out in the past two decades to perform DSE by focusing mainly on generator dynamic models and ignoring algebraic constraints and states. That is, the bulk of the literature investigates DSE based on ODE models, rather than a complete NDAE one. Most of the developed algorithms are either stochastic estimators (e.g., Kalman filters and its derivatives) or deterministic observers. The recent survey papers \cite{Liu2021TPWRS,8624411} produce a thorough summary of the state-of-the-art in power system DSE.

In stochastic estimators category, extended Kalman filter (EKF) \cite{ZhenyuIPEC2007}, unscented Kalman filter (UKF) \cite{GhahremaniITPWRS2011}, particle filter (PF) \cite{CuiITWPRS2015}, extended particle filter (EPF)  \cite{ZhouITPWRS2013} and ensemble Kalman filter (EnKF) \cite{6281499}  have been developed. A thorough comparative study has also been carried out between EnKF, UKF, PF, and EKF in \cite{ZhouITSG2015, LiuIEEACC2020} and the different strengths and weaknesses of each estimator have been highlighted.

To deal with process and measurement noise, model uncertainties and unknown inputs, robust versions of these stochastic estimators have also been developed such as, EKF with unknown inputs (EKF-UI) \cite{GhahremaniITWPRS2016}, generalized maximum likelihood EKF (GM-IEKF)  \cite{ZhaoITWPRS2017}, $H_\infty$ based EKF \cite{ZhaoITWPRS2018}, robust UKF \cite{ZhaoITSG2019} and robust CKF \cite{LiIEEACC2019} have been proposed.

As for deterministic observers, the core idea is to drive the error between original and estimated states to asymptotically zero (or bounded region around the origin) via convex optimization formulations, which are derived using Lyapunov stability criteria.  For example, the authors in \cite{SebastianITPWRS2020} design an observer for the ODE model of power system while considering the nonlinearities as Lipschitz bounded. In \cite{Jin2018MultiplierbasedOD}, a multiplier based observer design has been proposed and tested on a multi-machine power system with Lipschitz nonlinearities. In \cite{HaesITSG2020}, an unknown input observer has been designed for a linearized model of a power system. 

Although the NDAE and PMUs measurement  models~\eqref{equ:PSModel}--\eqref{equ:PMUModel} abstract grid transients adequately, the literature of DSE in power systems DSE either ignore the algebraic constraints (or reduce them), or linearize the  nonlinearities  around a certain operating point while ignoring algebraic power flow constraints.
In recent  studies, different methodologies have been proposed to perform linearization free DSE by dealing with nonlinearities through unscented transformation \cite{ZhaoITSG2019} and by considering them to be Lipschitz bounded \cite{SebastianITPWRS2020}. However, algebraic power flow constraints have been completely neglected mainly due to the complexity in performing DSE for the complete NDAE model. 

In most of the DSE literature the NDAE model is oversimplified usually through Kron reduction to obtain a nonlinear ODE model of a power system which can then be tackled using a plethora of DSE algorithms as discussed in the previous section. However, ODE models forego algebraic constraints and thus cannot capture system's dynamics pertaining to topological changes (e.g., sudden tripping of transmission lines) \cite{GRO201612}. Also, it is not clear how to incorporate loads and renewables in the nonlinear ODE models and subsequent DSE routines \cite{Wu2019InfluenceOL}. Another disadvantage is that, if ODE models are solely used for DSE, then PMU locations are limited to generator buses only.

\textcolor{black}{Thus, and as an alternative to performing DSE for the NDAE model, a common approach is to first estimate the algebraic variables of a power system and then use these values to estimate generator internal states using KF or its derivatives---a decoupled two-step approach. Such as in \cite{RouhaniITSG2018} a least absolute value (LAV)-based estimator is first used to estimate algebraic variables and then UKF is used in the second stage to estimate dynamic states of the power system. In \cite{ZhangITSE2014}, bus voltage and phase angle of the power system are first estimated using raw PMU measurements through adaptive KF and then these estimated algebraic variables along with EKF is used to estimate the generator dynamic states. However, a simple linearized second-order generator model has been used and the algebraic constraints \eqref{equ:PSModel-b} related to the NDAE model have been completely neglected. Similarly in \cite{RinaldiINPROCEEDINGS2017} a combination of Newton-Raphson based and higher mode sliding observer is used to estimate both algebraic and dynamic variables.}

\textcolor{black}{We also want to point out here that observability of a power system is generally a prerequisite while performing DSE. For linear time-invariant (LTI) model of power systems the observability can easily be assessed by checking the rank of the observability matrix \cite{8624411}. However for nonlinear models the observability depends on the operating point, thus a system may move between strongly and weakly observable for given measurements with changing operating/equilibrium points. Extensive research has also been carried out to assess the observability of a nonlinear model of a power system. For example in \cite{RouhaniITPWRS2017} lie derivative based method has been proposed to check the observability of fourth order synchronous generator model. It has been shown that measuring different states of the generator provides different levels (strong or weak) of observability. In \cite{ZhengITPWRS2021} a derivative free polynomial-chaos based method has been proposed to check the degree of observability of a NDAE model of power systems. Similarly in \cite{QiITPWRS2015} a method base on empirical observability Gramian has been proposed to assess the observability of synchronous generators using PMUs. 
To that end in this work however, we do not study the observability of NDAE power system models and we simply follow the literature. We assume that optimal number of PMUs are already placed in the network and all the states of the power system are completely observable.}

Based on the above discussion and aforementioned limitations in the surveyed studies, the objective of this paper is to produce a simple DSE algorithm that utilizes~\eqref{equ:PSModel}--\eqref{equ:PMUModel} to perform DSE for the complete NDAE representation of a power system under various sources of uncertainties. This contribution is the first in the literature of power systems DSE. The technical paper contributions are as follows: 
\begin{table}[t!]
	\textcolor{black}{
	\footnotesize	\renewcommand{\arraystretch}{1.3}
	\caption{Explanation of key notations used in the paper.}
	\label{tab:notation}
	\vspace{-0.1cm}
	\centering
	\begin{tabular}{|c|c|}
		\hline
		\textbf{Notation} & \textbf{Description}\\
		\hline
		\hline
		\hspace{-0.1cm}$\mathcal{E}$ and $\mathcal{N}$ & \hspace{-0.1cm} set of transmission lines and buses \\
		\hline
		\hspace{-0.1cm}$\mathcal{G}$, $\mathcal{L}$, and $\mathcal{R}$ & \hspace{-0.1cm} set of generator, load and renewable buses \\
		\hline
		\hspace{-0.1cm}$\delta_{i}$ and $\omega_{i}$ & \hspace{-0.1cm} generator rotor angle and frequency \\
		\hline
		\hspace{-0.1cm}$\omega_{0}$& \hspace{-0.1cm} synchronous speed ($2\pi60\;\mathrm{rad/sec}$) \\
		\hline
		\hspace{-0.1cm}$B_{ij}$ and $G_{ij}$& \hspace{-0.1cm} susceptance and conductance of line\\
		\hline
		\hspace{-0.1cm}$D_i$ & \hspace{-0.1cm} generator damping coefficient ($\mr{pu} \times \mr{sec}$)  \\
		\hline
		\hspace{-0.1cm}$E'_{di}$ and $E'_{qi}$ & \hspace{-0.1cm} generator transient voltage along \emph{dq}-axis ($\mathrm{pu}$) \\
		\hline
		\hspace{-0.1cm}$E_{\mr{fd}i}$ & \hspace{-0.1cm} generator field voltage  ($\mathrm{pu}$) \\
		\hline
		\hspace{-0.1cm}$\m I$ & \hspace{-0.1cm} identity matrix of appropriate dimension\\ 
		\hline
		\hspace{-0.1cm}$M_i$ & \hspace{-0.1cm} generator rotor inertia constant ($\mr{pu} \times \mr{sec}^2$) \\
		\hline
		\hspace{-0.1cm}$\m O$ & \hspace{-0.1cm} zero matrix of appropriate dimension\\
		\hline
		\hspace{-0.1cm}$P_{\mr{R}i},\;Q_{\mr{R}i}$ & \hspace{-0.1cm} active and reactive power from renewables ($\mr{pu}$)  \\
		\hline
		\hspace{-0.1cm}$P_{\mr{G}i},\;Q_{\mr{G}i}$ & \hspace{-0.1cm} active and reactive power from generators ($\mr{pu}$)  \\
		\hline
		\hspace{-0.1cm}$P_{\mr{L}i},\;Q_{\mr{L}i}$ & \hspace{-0.1cm} active and reactive load demand ($\mr{pu}$)  \\
		\hline
		\hspace{-0.1cm}$T'_{\mr{d0}i}$ and $T'_{\mr{q0}i}$ & \hspace{-0.1cm} open-circuit time constant along \emph{dq}-axis ($\mr{sec}$) \\
		\hline
		\hspace{-0.1cm}$T_{\mr{M}i}$ & \hspace{-0.1cm} generator mechanical input torque ($\mathrm{pu}$)\\
		\hline
		\hspace{-0.1cm}$\m u$ & \hspace{-0.1cm} system's  inputs vector \\ %$\left(\mbb{R}^{n_u}\right)$  \\
		\hline
		\hspace{-0.1cm}${v}_i$ and ${\theta}_i$& \hspace{-0.1cm} bus voltage and angle ($\mr{pu}$)  \\
		\hline
		\hspace{-0.1cm}$x_{\mr{d}i}$ and $x_{\mr{q}i}$& \hspace{-0.1cm} synchronous reactance along \emph{dq}-axis ($\mr{pu}$) \\
		\hline
		\hspace{-0.1cm}$x'_{\mr{d}i}$ and $x'_{\mr{q}i}$ & \hspace{-0.1cm} transient reactance along \emph{dq}-axis ($\mr{pu}$) \\
		\hline
		\hspace{-0.1cm}$\m x_{\mr{d}}$ and $\m x_{\mr{a}}$ & \hspace{-0.1cm} dynamic and algebraic variables  \\
		\hline
		\hspace{-0.1cm}$\m x$ and $\hat{\m x}$ & \hspace{-0.1cm} actual and estimated state vector \\ % $\left(\mbb{R}^{n_d}\right)$  \\
		\hline
		\hspace{-0.1cm}$\mathcal{X}_d$ and $\mathcal{X}_a$ & \hspace{-0.1cm} set containing upper and lower bounds of $\m x_d$ and $\m x_a$ \\ 
		\hline
		\hspace{-0.1cm}$\m y$ and $\hat{\m y}$& \hspace{-0.1cm} actual and estimated output vector \\ 
		\hline
		\hspace{-0.1cm}$\mathbb{R}^n$ & \hspace{-0.1cm} row vector of $n$ real numbers\\
		\hline
		\hspace{-0.1cm}$\mathbb{R}^{p\times q}$ & \hspace{-0.1cm} real matrix of size $p$-by-$q$\\
		\hline
		\hspace{-0.1cm}$\mathrm{Blkdiag}(\cdot)$ & \hspace{-0.1cm} generate a block diagonal matrix\\
		\hline
		\hspace{-0.1cm}$\mathrm{Diag}(\cdot)$ & \hspace{-0.1cm} generate a diagonal matrix\\
		\hline
		\hspace{-0.1cm}$\oslash$ and $\odot$ & \hspace{-0.1cm} Hadamard division and Hadamard product \\
		\hline
		\hspace{-0.1cm}$*$ & \hspace{-0.1cm} denotes symmetric entries in a symmetric matrix\\
		\hline
	\end{tabular}}
	\vspace{-0.3cm}
\end{table}

\begin{itemize}[leftmargin=*]
\item   
This work is first to propose DSE with NDAE representation of a power system having $(1)$ a higher order generator model with power balance equations of network and generator stator algebraic equations, $(2)$ linearization free DSE approach, $(3)$ more practical PMUs based measurement model, and $(4)$ simultaneous estimation of both algebraic and dynamic states of the power system.

\item To deal with the process and measurement noise and disturbances from load and renewables, we propose $H_{\infty}$ based NDAE observer which provides robust state estimation in the presence of Gaussian, non-Gaussian process and measurement noise, as well as uncertainty from loads and renewables. The main advantage of the $H_{\infty}$ NDAE observer over the stochastic estimators is it does not require any statistical properties of the disturbances. 
Albeit these types of observer designs are widely used in control theoretic literature \cite{ThabetITCST2018,ChenITAC2007,GuopingITCS2006,AlessandriITAC2020,PhamICTSL2019}, however, to the best of authors' knowledge no such work has been carried out to assess their applicability in power systems DSE. The proposed observer design is different from those provided in \cite{GuopingITCS2006,AlessandriITAC2020,PhamICTSL2019,9735348} as we have used $H_\infty$ and proportional integral (discussed below) notion and shaped the overall observer design as semidefinite optimization problem to synthesize a robust observer to perform DSE for power systems.  In particular, the formulated observer design is unique on its own as it includes constraints and objectives that allow for the observer to be practically implemented for power system \textbf{nonlinear} DAE model state estimation. 
\item To deal with the unknown control inputs we have extended the $H_{\infty}$ NDAE observer based on proportional integral (PI) framework. The main advantage of PI-based observer to handle unknown inputs over methods present in the literature is that there is no requirement for the generator buses to be equipped with PMUs \cite{ZhaoITPWRS2020}. The real-time implementation of the estimator is as simple as a one-step state predictor.
\end{itemize}

The rest of the paper is organized as follows. Section \ref{sec:DAE_ model} describes the nonlinear differential algebraic model of the power system. Section \ref{section:robust_obs_design} focuses on modeling uncertainty and the design of the robust observer under different sources of uncertainty.  Case studies are presented in Section \ref{section:simulations} and the paper is concluded in Section \ref{section:conclusion}.
\section{Nonlinear DAE Model Of Power Systems}\label{sec:DAE_ model}

We consider a graphical representation of a power system $(\mathcal{N},\mathcal{E})$, where $\mathcal{E} \subseteq \mathcal{N}\times\mathcal{N}$ are the total number of transmission lines, $\mathcal{N} = \mathcal{G} \cup \mathcal{L}$ are the total number of buses in the network while $\mathcal{G}$ and $\mathcal{L}$ are the set of generator and load buses respectively. The set of equations we used to describe our model are ordinary differential equations (ODEs), describing the generator dynamic model, and algebraic equations describing the power flow/balance equations. Combining both these equations we get the NDAE representation of the power system. A detailed description of these equations is given in the following section.
\vspace{-0.4cm}
\subsection{{Generator Dynamics and Algebraic Equations }}
%\vspace{-0.1cm}
We consider the standard two axes $4^{th}$-order transient model of synchronous generator $i$ $\in$ $\mathcal{G}$ which can be represented by the following differential equations \cite{sauer2017power}
\begin{subequations} \label{eq:SynGen}
	\begin{align}
	\dot{\delta}_{i} &= \omega_{i} - \omega_{0} \label{eq:SynGen1} \\ 
	\begin{split}
	M_{i}\dot{\omega}_{i} &= T_{\mr{M}i}-P_{\mr{G}i}- D_{i}(\omega_{i}-\omega_{0}) \end{split}\label{eq:SynGen2}    \\ 
	T'_{\mr{d0}i}\dot{E}'_{qi} &= -\tfrac{x_{\mr{d}i}}{x'_{\mr{d}i}}E'_{qi} +\tfrac{x_{\mr{d}i}-x'_{\mr{d}i}}{x'_{\mr{d}i}}v_i\cos(\delta_{i}-\theta_i) + E_{\mr{fd}i}  \label{eq:SynGen3} \\
T'_{\mr{q0}i}\dot{E}'_{di} &= -E'_{di} +\tfrac{x_{\mr{q}i}-x'_{\mr{q}i}}{x_{\mr{q}i}}v_i\sin(\delta_{i}-\theta_i)	 \label{eq:SynGen4}  
	\end{align} 
\end{subequations}
\textcolor{black}{where $\bmat{{\delta}_{i}&{\omega}_{i}&{E'}_{di}&{E'}_{qi}}$ are the four states of synchronous machine. The detailed explanation of each of these parameter is given in Table \ref{tab:notation}.}

\textcolor{black}{The algebraic constraints in the model are the power (active and reactive) flow equations and the model describing real and reactive power generated by the synchronous generators. These equations must be satisfied for all time instances and can be represented as  \cite{sauer2017power}}
\begin{subequations}\label{eq:SynGenPower}
	\begin{align}
		\begin{split}
			\hspace{-0.3cm}P_{\mr{G}i} &= \tfrac{1}{x'_{\mr{d}i}}E'_{qi}v_i\sin(\delta_i-\theta_i) -\tfrac{x_{\mr{q}i}-x'_{\mr{d}i}}{2x'_{\mr{d}i}x_{\mr{q}i}}v_i^2\sin(2(\delta_i-\theta_i))
		\end{split}
		\label{eq:SynGenPower1} \\
		\begin{split}
			\hspace{-0.3cm}Q_{\mr{G}i} &= \tfrac{1}{x'_{\mr{d}i}}E'_{qi}v_i\cos(\delta_i-\theta_i)-\tfrac{x'_{\mr{d}i}+x_{\mr{q}i}}{2x'_{\mr{d}i}x_{\mr{q}i}}v_i^2\\
			\hspace{-0.3cm}&\quad -\tfrac{x_{\mr{q}i}-x'_{\mr{d}i}}{2x'_{\mr{d}i}x_{\mr{q}i}}v_i^2\cos(2(\delta_i-\theta_i))
		\end{split}\label{eq:SynGePower2}
	\end{align}
\end{subequations}\\
\textcolor{black}{where  $i$ $\in$ $\mathcal{G}$. The power balance equation among generators, renewables and loads can be written as}
\begingroup
\allowdisplaybreaks 
\begin{subequations} \label{eq:GPF}
	\begin{align} 
		\begin{split}
			\hspace{-0.4cm}P_{\mr{G}i}+ P_{\mr{R}i} -P_{\mr{L}i} \hspace{-0.05cm}&=\hspace{-0.05cm} \sum_{j=1}^{N}\hspace{-0.05cm} v_iv_j\hspace{-0.05cm}\left(G_{ij}\cos \theta_{ij} \hspace{-0.05cm}+ \hspace{-0.05cm}B_{ij}\sin \theta_{ij}\right)
		\end{split}\label{eq:GPF1}\\
		\begin{split}
			\hspace{-0.4cm}Q_{\mr{G}i} + Q_{\mr{R}i}- Q_{\mr{L}i}\hspace{-0.05cm} &=\hspace{-0.05cm} \sum_{j=1}^{N}\hspace{-0.05cm} v_iv_j\hspace{-0.05cm}\left(G_{ij}\sin \theta_{ij} \hspace{-0.05cm}- \hspace{-0.05cm}B_{ij}\cos \theta_{ij}\right)
		\end{split}\label{eq:GPF2}		
	\end{align}
\end{subequations}
\endgroup
\textcolor{black}{where  $\theta_{ij}= \theta_i-\theta_j$ is the bus angle.  Similarly, for load buses  $i$ $\in$ $\mathcal{L}$ the power flow equations can be written in a same fashion as \eqref{eq:GPF} with the exception that $P_{Gi}= Q_{Gi} = P_{Ri}= Q_{Ri} = 0$. Note that in this paper we are modeling renewables as a negative load meaning they are injecting power into the network.}

To proceed, we define ${\m x}_d\hspace{-0.2cm}=\hspace{-0.3cm}\bmat{\m \delta^\top\;\m \omega^\top\;\m E'^\top_{q}\; {{\mE'}}^\top_{d}}^\top$ as the dynamic states,  ${\m x}_a\hspace{-0.1cm} = \hspace{-0.1cm}\bmat{\m P^\top_{G} &\m Q^\top_{G} &\m v^\top &\m \theta^\top}^\top$ as the algebraic variables, $\m q\hspace{-0.2cm} = \hspace{-0.2cm}\bmat{\m P_{R}^{\top}&\m Q_{R}^{\top}&\m P_{L}^{\top}&\m Q_{L}^{\top}}^\top$, and $\m u \hspace{-0.2cm}=\hspace{-0.2cm} [\m T_{\mathrm{M}}^{\top}\;\m E_{\mathrm{fd}}^{\top}]^\top$, where  ${\m P_{G}}\hspace{-0.05cm}=\hspace{-0.05cm}\{P_{Gi}\}_{i\in \mc{G}}\hspace{-0.05cm}$,   ${\m Q_{G}}\hspace{-0.05cm}=\hspace{-0.05cm}\{Q_{Gi}\}_{i\in \mc{G}}\hspace{-0.05cm}$, ${\m P_{L}}\hspace{-0.05cm}=\hspace{-0.05cm}\{P_{Li}\}_{i\in \mc{L}}\hspace{-0.05cm}$,   ${\m Q_{L}}\hspace{-0.05cm}=\hspace{-0.05cm}\{Q_{Li}\}_{i\in \mc{L}}\hspace{-0.05cm}$, 
${\m v}\hspace{-0.05cm}=\hspace{-0.05cm}\{v_i\}_{i\in \mc{N}}\hspace{-0.05cm}$, ${\m\theta}\hspace{-0.05cm}=\hspace{-0.05cm}\{\theta_i\}_{i\in \mc{N}}\hspace{-0.05cm}$, ${\m \delta}\hspace{-0.05cm}=\hspace{-0.05cm}\{\delta_i\}_{i\in \mc{G}}\hspace{-0.05cm}$, ${\m \omega}\hspace{-0.05cm}=\hspace{-0.05cm}\{\omega_i\}_{i\in \mc{G}}\hspace{-0.05cm}$,  ${\m E'_{q}}\hspace{-0.05cm}=\hspace{-0.05cm}\{E'_{qi}\}_{i\in \mc{G}}\hspace{-0.05cm}$, and  ${\m E'_{d}}\hspace{-0.05cm}=\hspace{-0.05cm}\{E'_{di}\}_{i\in \mc{G}}\hspace{-0.05cm}$. 
Based on the above vectors descriptions, the 
NDAE model \eqref{eq:SynGen}--\eqref{eq:GPF} of a power system can be represented as
\begin{subequations}\label{eq:nonlinearDAE}
	\begin{align}
		\dot{{\m x}}_d &= {\m A}_d{\m x}_d +  {\m F}_d{\m f}_d\left({\m x}_d,{\m x}_a\right) + {\m B}_d {\m u} + {\m h} \omega_{0} \label{eq:nonlinearDAE-1}\\
		\m 0 &= {\m A}_a{\m x_a} + {\m F}_a{\m f}_a\left({\m x}_d,{\m x}_a\right) + {\m B}_a {\m q}\label{eq:nonlinearDAE-2}
	\end{align}
\end{subequations}
where  $\m x_a \in \mbb{R}^{n_a}$, $\m x_d \in \mbb{R}^{n_d}$, $\m u \in \mbb{R}^{n_u}$, and $\m q \in \mbb{R}^{n_q}$. The functions $\m f_a:\mathbb{R}^{n_d}\times \mathbb{R}^{n_a}\rightarrow \mathbb{R}^{n_{fa}}$, and $\m f_d:\mathbb{R}^{n_d}\times \mathbb{R}^{n_a}\rightarrow \mathbb{R}^{n_{fd}}$ describes the nonlinearities in the algebraic and dynamic states respectively, while the rest of the constant matrices ${\m A}_a\in \mbb{R}^{n_a\times n_a}$, ${\m F}_a\in \mbb{R}^{n_{a}\times n_{fa} }$,  ${\m B}_a\in \mbb{R}^{n_{a}\times  n_q}$, ${\m A}_d\in \mbb{R}^{n_d\times n_d}$, ${\m F}_d\in \mbb{R}^{n_{d}\times n_{fd} }$ , ${\m B}_d\in \mbb{R}^{ n_{d}\times  n_u}$, and $\m h\in \mbb{R}^{n_d}$ are all detailed in Appendix \ref{appdx:A}. Considering an overall state vector $ \m {x} = \bmat{\m x_d^\top & \m x_a^\top}^\top \in \mbb{R}^n$ then the model detailed in \eqref{eq:nonlinearDAE} can be rewritten in a compact form as 
\begin{align}\label{eq:nonlinearDAEexplicit-1}
		\begin{split}
		{{\m Z}}\dot{{\m x}} &= {{\m A}}\m{x} +  {{\m F}}{\m f}\left( \m{x}\right) + \m B_u \m u + \m B_q \m q + \m H \omega_{0}
	\end{split}
	\end{align}
\textcolor{black}{where
\begin{align*}
\begin{split}
\m Z\hspace{-0.1cm}=\hspace{-0.1cm}\bmat{\m I& \mO\\\mO&\mO},{\m A}\hspace{-0.1cm}=\hspace{-0.1cm}\bmat{\m A_d& \mO\\ \mO&\m A_a},\m F\hspace{-0.1cm}=\hspace{-0.1cm}\bmat{\m F_d& \mO\\\mO&\m F_a},\m H\hspace{-0.1cm}=\hspace{-0.1cm}\bmat{\m h\\ \m O}
\end{split}\\
\begin{split}
{\m f}\left( x\right)= \bmat{{\m f}_d\left(\m x\right)\\{\m f}_a\left(\m x\right)},\, \mB_u = \bmat{\m B_d& \mO}^\top, \mB_q = \bmat{\mO&\m B_a}^\top
\end{split}
\end{align*}}
In the following section we bound the nonlinearities using Lipschitz continuity condition and present the PMU based measurement model.
\subsection{Bounding Nonlinearities and PMU Measurement Model}

The vector $\m f(.)$ encapsulates the nonlinearities in dynamic $\m f_d(.)$ and algebraic equations $\m f_a(.)$. In this work we assume that the function $\m f(.)$ is Lipschitz continuous, which means $\m f(.)$ is continuously differentiable and the magnitude of the derivative is bounded above by a constant real number. 

We note that this assumption holds herein as there are indeed upper and lower bounds on the states (e.g., frequency, voltages) of the system \cite{SebastianITPWRS2020}. To that end, let us consider two sets $\mathcal{\mX}_d$ and $\mathcal{\m X}_a$ as follows
\begin{subequations}
	\begin{align}
		\mathcal{\mX}_d &= [\underline{\delta}, \overline \delta]\times [\underline{\omega}, \overline \omega]\times [\underline{E}'_q, \overline { E}'_q]\times[\underline{E}'_d, \overline { E}'_d] \label{Lipschitz1} \\
		\mathcal{\mX}_a &= [\underline{v}, \overline v]\times [\underline{\theta}, \overline \theta] \label{Lipschitz2}
	\end{align}
\end{subequations}
where $\m x_d \in \mathcal{\m X}_d$ and $\m x_a \in \mathcal{\m X}_a$. Eqs. \eqref{Lipschitz1} and \eqref{Lipschitz2} define upper and lower bounds or operating regions of the state variables of NDAE model \eqref{eq:nonlinearDAEexplicit-1}. These operating regions can be determined based on the operator's knowledge or by performing extensive simulations studies while applying different contingencies and then finding the operating regions of the generators \cite{QiIETPWRS2017}. The Lipschitz continuity condition can be written as
\begin{align}\label{eq:lipshitz}
	\norm{\m f(\m x) - \m f(\hat{\m x})}_2 \leq \norm {\m G(\m x-\hat{\m x})}_2
\end{align}
where $\m G  \in \mbb{R}^{n\times n}$ is a constant diagonal matrix and it contain Lipschitz constants for each of the corresponding nonlinearity in algebraic and dynamic equation of NDAE model \eqref{eq:nonlinearDAEexplicit-1}. These Lipschitz constants  can be determined numerically given that  $\mathcal{\m X}_d$ and $\mathcal{\m X}_a$ are known. Readers are referred to \cite{SebastianACC2019} for the complete method explaining the computation of these Lipschitz constants.

\textcolor{black}{As for the PMU measurement model, let us consider a vector $\m y = \bmat{\m{\mr V}^\top&\m{\mr I}^\top}^\top\in\mbb{R}^p$ denoting the measurements received from PMUs, where  ${\m{\mr V}} = \bmat{\{v_{Rj}\}_{j\in \mathcal N} +\{v_{Ij}\}_{j\in \mathcal N}}$ represents voltage phasors and $\m {\mr I} = \bmat{\{I_{Rji}\}_{i\in \mathcal N_j} +\{I_{Iji}\}_{i\in \mathcal N_j}}$ denotes current phasors of bus $j$ and nearby buses $(\mathcal N_j \in \mc N)$ connected with bus $j$. Then the NDAE model of a multi-machine system with PMUs measurement model can be expressed as}
\begin{subequations}\label{eq:final_DAE}
	\begin{align}
	%\begin{empheq}[box=\widefbox]{align}
		\m Z\dot{{\m x}} &= \m A\m x +  {\m F}{\m f}\left(\m x\right) + \m B_u \m u + \m B_q \m q + {\m H} \omega_{0}+ \m w_p \label{eq:final_DAE1}\\
		\m y &= \m C\m x + \m w_m \label{eq:final_DAE2}
	\end{align}
	%\end{empheq}
\end{subequations}
\textcolor{black}{where $\m w_m \in\mbb{R}^p$ denotes measurements noise, $\m w_p \in\mbb{R}^p$ represent random process noise, and $\mC\in\mbb{R}^{p\times n}$ is a constant output matrix that maps state vector $\m x$ to what typically PMUs measure (i.e., voltage and current phasors). The overall structure of $\mC$ is detailed in Appendix \ref{Appndx:pmu measurement}.}
\vspace{-0.2cm}
\section{Joint Estimator for NDAE States}\label{section:robust_obs_design}

\textcolor{black}{The NDAE model presented in \eqref{eq:final_DAE} assumes ideal power system conditions. However, there are always different types of disturbances and unknown inputs that ought to be modeled when designing a state estimation method. With that in mind, to model disturbances from load and renewables we consider that minutes- or hour-ahead predictions of these quantities are available but the disturbances are unknown.  Accordingly, one can write $\m q(t) = \bar{\m q} + \Delta \m q(t)$ where $\bar{\m q}$ is the known part and $\Delta \m q(t)$ is the unknown uncertainty. Similarly for generator's control inputs we can write  $\m u(t) = \bar{\m u} + \Delta \m u(t)$ where $\bar{\m u}$ is the known steady state value of the inputs and $\Delta \m u(t)$ model the disturbances. With that in mind the NDAE model  \eqref{eq:final_DAE} can be rewritten as
\begin{subequations}\label{eq:NDAE_with_noise}
	\begin{align}
	\hspace{-0.5cm}	\m Z\dot{{\m x}} &= \m A\m x +  \m F\m f\left(\m x\right) + { {\m {B}}_u {{\m {u}} } + \m B_q \bar{\m q} + {\m H} \omega_{0}} + {\m {B}_w} {{\m w} }  \label{eq:NDAE_with_noise_a}\\
		\m y &= \m C\m x  + {\m {D}_w} {{\m w}. }\label{eq:NDAE_with_noise_b}
	\end{align}
\end{subequations}
The matrices ${\m {B}_w}$ and ${\m {D}_w}$ are constant known matrices and they maps  $\m w$ into the system's dynamics and PMUs measurements. 
These matrices are constructed as ${\m {B}_w}\hspace{-0.2cm} =\hspace{-0.2cm} \Blkdiag\left( \mI,\, \mO,\, \mB_{w_R},\, \mB_{w_L}\right)\hspace{-0.2cm} \in\hspace{-0.2cm} \mbb{R}^{n\times q}$ and $\m{D}_w \hspace{-0.2cm}=\hspace{-0.2cm} \Blkdiag\left( \mO,\, \mI,\, \mO,\, \mO\right) \hspace{-0.06cm}\in\hspace{-0.07cm}\mbb{R}^{p\times q}$, where $\mB_{w_R} \in \mbb{R}^{\mc{N}\times\mc{N}}$ is a binary matrix and has $1's$ at those locations where renewables are connected to buses and \textit{zero} otherwise, similarly $\mB_{w_L} \in \mbb{R}^{\mc{N}\times\mc{N}}$ has  $1's$ only at those locations where buses are connected to loads.} 
 
\vspace{-0.2cm}
\subsection{$H_\infty$ Stability and Observer Design}\label{sec:Robust_sec_A}
\vspace{-0.1cm}
To begin with the observer design, let $\hat{\m x}$ be the estimated states and $\hat{\m y}$ be the estimated outputs, then the proposed estimator/observer dynamics for the NDAE model \eqref{eq:NDAE_with_noise} can be written as
\begin{subequations} \label{eq:obsr_dynamics}
	\begin{align} 
		\begin{split}
			\m Z\dot{\hat{\m x}}\hspace{-0.05cm} &=\hspace{-0.05cm} {\m A}{\hat{\m x}}\hspace{-0.05cm}\hspace{-0.05cm} + \hspace{-0.05cm} {\m F}{\m f}\left(\hat {\m x}\right)\hspace{-0.05cm} +\hspace{-0.05cm}\m L\left( \m y - \hat{\m y}\right)\hspace{-0.05cm} + \hspace{-0.05cm} {\m {B}}_u {{\m {u}} } \hspace{-0.05cm}+\hspace{-0.05cm} \m B_q \bar{\m q} \hspace{-0.05cm}+\hspace{-0.05cm} {\m H} \omega_{0}
		\end{split} \label{eq:obsr_dynamics1}\\
		\begin{split}
			\hat{\m y} &= \m C \hat{\m x}
		\end{split} \label{eq:obsr_dynamics2}
	\end{align}
\end{subequations}
where $\mL \in \mbb{R}^{n\times p}$ is the Luenberger type gain matrix. Let $\m e = \m x - \hat{\m x}$ be the error between estimated and actual states. Multiplying $\m Z$ on both sides and taking the derivative, then  the error dynamics can be written as
\begin{align}\label{eq:errorDyn}
	\m Z\dot{{\m e}} = \m Z\dot{{\m x}} - \m Z\dot { \hat{\m x}}.
\end{align}
Now putting values of $\m Z\dot{{\m x}}$ and $\m Z\dot { \hat{\m x}}$ from \eqref{eq:NDAE_with_noise_a}--\eqref{eq:obsr_dynamics1} and simplifying, the estimation error dynamics \eqref{eq:errorDyn} can be rewritten as
\begin{align}\label{eq:errorDyn_Hinf}
	\m Z\dot{{\m e}} &= (\m A -\m L \m C)\m e + \m F \Delta \m f+ (\mB_w - \mL \mD_w)\m w
\end{align}
where $ \Delta \m f =\m f \left(\m x\right)-\m f \left(\hat{\m x}\right)$. Our main objective throughout this paper is to design observer gain $\mL$ such that the estimation error dynamics \eqref{eq:errorDyn_Hinf} converges asymptotically to zero and robust performance from the observer can be achieved under various sources of unknown disturbances

With that in mind, we introduce the $H_\infty$ stability notion to minimize the impact of disturbance vector $\m w$ on the estimation error dynamics. In state estimation theory the $H_\infty$ notion was first introduced in \cite{ShakedITAC1990} to design an optimal state estimator for a linear system subject to disturbances. The main advantage of $H_\infty$ norm minimization over Kalman filters is that it does not require any prior knowledge about the statistic of the noise. In $H_\infty$ based state estimation, the noise/disturbances are considered as arbitrary bounded signals and the observer is designed to ensure a specified $H_\infty$ performance for the error dynamics for all disturbances.

To that end, the $H_\infty$ stability definition can be applied to estimation error dynamics \eqref{eq:errorDyn_Hinf} in the following fashion.
\begin{mydef}\label{def:H_inf}
Let $\m \beta = \m\varGamma \m e$ be the performance of error dynamics with $\m\varGamma \in \mbb{R}^{n\times n}$ as the user defined performance matrix. Then the nonlinear estimation error dynamics \eqref{eq:errorDyn_Hinf} is $H_{\infty}$ stable with performance level  $\gamma$ if, $(a)$ \eqref{eq:errorDyn_Hinf} is stable when $\m w = 0$ for all $t>0$, $(b)$ $\norm{{\m\beta }}^2_{L_2} <  \gamma\norm{{\m w }}^2_{L_2}$ for zero initial error ($\m e = 0$) and for any bounded disturbances $\m w$.
\end{mydef}
Definition \ref{def:H_inf} with performance level $\gamma$ can be interpreted as follows. When the initial value of the error is zero i.e., $\m e = 0$, then for $t>0$ the magnitude of the performance vector $\m \beta$ is guaranteed to evolve in a way such that its value is always less than the constant time the magnitude of disturbance $\m w$. Note that if the value of $\gamma$ is large then it means that the performance of error dynamics $\m \beta$ can highly be affected by the disturbance vector $\m w$. Thus in the $H_{\infty}$ based observer design we want to minimize $\gamma$ as small as possible to get robust performance of the observer under various sources of disturbances. 

To that end, we now propose a systematic way based on Lyapunov stability criteria to synthesize $H_{\infty}$ based observer for the NDAE described in \eqref{eq:NDAE_with_noise}.
\begin{theorem}\label{theorm:H_inf}
Consider the NDAE model \eqref{eq:NDAE_with_noise} and observer dynamics \eqref{eq:obsr_dynamics}. Then the estimation error dynamics \eqref{eq:errorDyn_Hinf}  converges asymptotically to zero and is robust in $H_\infty$ sense against disturbances, if there exists positive definite matrix $\m X \in \mbb{R}^{n \times n}$, two matrices $\mR\in \mbb{R}^{p\times n}$ and $\mY\in \mbb{R}^{n_a\times n}$ and a scalar ${\epsilon} \in\mbb{R}_{++}$ and $\gamma > 0$ such that the following semidefinite optimization problem is solved.
\begin{align*}
\mathbf{\left( P_1\right) }\;\minimize_{{\epsilon},\gamma, \m X,\m R, \m Y} &\;\;\; c_1\gamma\\ \subjectto  & \;\;\;\mr{LMI}\; \eqref{eq:LMI_Hinf},\;\m X \succ 0,\;{\epsilon} > 0,\gamma>0
\end{align*} 
where $\mr{LMI}$ \eqref{eq:LMI_Hinf} is given as
	\begin{align}\label{eq:LMI_Hinf}
		\hspace{-0.1cm}\bmat{ \m\Omega & * & * \\(\mX\mZ+\mZ^{\perp\top}\mY)^\top \m F & -{\epsilon}\m I &* \\
			\m{B_w}^\top(\mX\mZ+\mZ^{\perp\top}\mY)-\m{D_w}^\top\m{R}&\m O&-{\gamma}\m I} \prec 0 
	\end{align}
and $\m\Omega $ is as follows
	\begin{align*}
		\begin{split}
			\m\Omega = \mA^\top(\mX\mZ+\mZ^{\perp\top}\mY) + (\mX\mZ+\mZ^{\perp\top}\mY)^\top\mA -\\
			\m C^\top \m R-\m R^\top \m C +\epsilon\m G^\top \m G+\m \varGamma ^\top \m \varGamma 
		\end{split}
	\end{align*}
 where $\mZ^{\perp\top}\in\mbb{R}^{n_a\times n}$ is the orthogonal  complement of $\mZ$, meaning $\mZ^{\perp\top}\mZ = 0$. Upon solving $\mathbf{P_1}$ the observer gain can be recovered as $\m L = \m P^{-\top} \m R^\top$
\end{theorem}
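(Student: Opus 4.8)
The plan is to certify $H_\infty$ stability of the error dynamics \eqref{eq:errorDyn_Hinf} through a generalized (descriptor-type) Lyapunov function and then to \emph{linearize} the resulting matrix inequality by a change of variables so that it becomes \eqref{eq:LMI_Hinf}. First I would write $\m P := \m X\m Z + \m Z^{\perp\top}\m Y$ and, using the stated orthogonality $\m Z^{\perp\top}\m Z=\m O$ together with $\m Z=\m Z^\top$, observe that $\m Z^\top\m P = \m Z^\top\m X\m Z = \m P^\top\m Z$ is symmetric and positive semidefinite (since $\m X\succ 0$). This legitimizes $V(\m e):=\m e^\top\m Z^\top\m P\,\m e = (\m Z\m e)^\top\m X(\m Z\m e)\ge 0$ as a Lyapunov candidate for the singular system \eqref{eq:errorDyn_Hinf}; note $V$ penalizes only the dynamic component of the error. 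I would also introduce $\m R:=\m L^\top\m P$, so that the bilinear terms $\m P^\top\m L\m C$ and $\m P^\top\m L\m D_w$ become linear ($\m R^\top\m C$, $\m D_w^\top\m R$) in the decision variables; recovering $\m L=\m P^{-\top}\m R^\top$ requires $\m P$ nonsingular, which I would verify from the block-triangular form of $\m P$ (its diagonal blocks being the positive-definite leading block of $\m X$ and a block of $\m Y$).

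Next I would differentiate $V$ along \eqref{eq:errorDyn_Hinf}. Because $\m Z^\top\m P$ is symmetric, $\dot V = \m e^\top\m P^\top(\m Z\dot{\m e}) + (\m Z\dot{\m e})^\top\m P\,\m e$, and substituting $\m Z\dot{\m e}=(\m A-\m L\m C)\m e+\m F\,\Delta\m f+(\m B_w-\m L\m D_w)\m w$ renders $\dot V$ a quadratic form in $\m\zeta:=\bmat{\m e^\top & \Delta\m f^\top & \m w^\top}^\top$. To absorb the nonlinearity I would append the Lipschitz S-procedure term $\epsilon(\m e^\top\m G^\top\m G\,\m e-\Delta\m f^\top\Delta\m f)\ge 0$, valid for any $\epsilon>0$ by \eqref{eq:lipshitz}, and to encode Definition \ref{def:H_inf} I would form $\mathcal{J}:=\dot V+\m\beta^\top\m\beta-\gamma\,\m w^\top\m w+\epsilon(\m e^\top\m G^\top\m G\,\m e-\Delta\m f^\top\Delta\m f)$ with $\m\beta=\m\varGamma\m e$. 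A direct expansion gives $\mathcal{J}=\m\zeta^\top\m\Psi\,\m\zeta$ where, after substituting $\m P=\m X\m Z+\m Z^{\perp\top}\m Y$ and $\m R=\m L^\top\m P$, the symmetric matrix $\m\Psi$ is exactly the left-hand side of \eqref{eq:LMI_Hinf}: the $(1,1)$ block collects $\m A^\top\m P+\m P^\top\m A-\m C^\top\m R-\m R^\top\m C+\epsilon\m G^\top\m G+\m\varGamma^\top\m\varGamma=\m\Omega$, the $(2,1)$ block is $(\m X\m Z+\m Z^{\perp\top}\m Y)^\top\m F$, the $(3,1)$ block is $\m B_w^\top(\m X\m Z+\m Z^{\perp\top}\m Y)-\m D_w^\top\m R$, and the diagonal entries $-\epsilon\m I$ and $-\gamma\m I$ come from the two sign-definite terms. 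Hence feasibility of \eqref{eq:LMI_Hinf} is equivalent to $\m\Psi\prec 0$, i.e.\ $\mathcal{J}<0$ for every nonzero $\m\zeta$, which after discarding the nonnegative Lipschitz term gives $\dot V<-\m\beta^\top\m\beta+\gamma\,\m w^\top\m w$.

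From this inequality both assertions follow. Setting $\m w=0$ and restricting $\m\Psi\prec 0$ to the $(\m e,\Delta\m f)$ principal block yields $\dot V+\m\beta^\top\m\beta+\epsilon(\m e^\top\m G^\top\m G\m e-\Delta\m f^\top\Delta\m f)\le-\rho(\|\m e\|^2+\|\Delta\m f\|^2)$ for some $\rho>0$; discarding $\m\beta^\top\m\beta\ge0$ and invoking \eqref{eq:lipshitz} gives $\dot V\le-\rho\|\m e\|^2\le-\alpha V$ for a suitable $\alpha>0$, so the dynamic error $\m e_d$ decays exponentially. The algebraic error $\m e_a$ then vanishes because the bottom (algebraic) block of \eqref{eq:errorDyn_Hinf} determines $\m e_a$ as a locally Lipschitz function of $\m e_d$ and $\Delta\m f$ (using regularity inherited from $\m A_a$, $\m F_a$), whence $\m e\to0$. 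For the performance bound, take $\m e(0)=0$ so $V(0)=0$, integrate $\dot V<-\m\beta^\top\m\beta+\gamma\,\m w^\top\m w$ over $[0,T]$, and use $V(T)\ge0$ to obtain $\int_0^T\|\m\beta\|^2\,dt<\gamma\int_0^T\|\m w\|^2\,dt$; letting $T\to\infty$ gives $\norm{\m\beta}_{L_2}^2<\gamma\norm{\m w}_{L_2}^2$, exactly Definition \ref{def:H_inf}(b). Minimizing the weighted objective $c_1\gamma$ over the LMI feasible set then returns the smallest certifiable performance level, and $\m L=\m P^{-\top}\m R^\top$ recovers the gain.

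The main obstacle is the descriptor structure: since $\m Z$ is singular, $V$ is only positive \emph{semi}definite, so $\dot V<0$ alone does not pin down the full error vector, and the algebraic part of \eqref{eq:errorDyn_Hinf} (with its regularity inherited from $\m A_a$, $\m F_a$) must be exploited to propagate convergence from $\m e_d$ to $\m e_a$; one must also ensure the parametrization $\m P=\m X\m Z+\m Z^{\perp\top}\m Y$ simultaneously makes $\m Z^\top\m P$ symmetric positive semidefinite and keeps $\m P$ invertible for gain recovery. The remaining steps — expanding $\dot V$, inserting the Lipschitz term, and matching coefficients with \eqref{eq:LMI_Hinf} — are routine linear algebra.
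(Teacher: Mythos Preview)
Your proposal is correct and follows essentially the same route as the paper: the same descriptor Lyapunov candidate $V(\m e)=\m e^\top\m Z^\top\m P\,\m e$ with $\m Z^\top\m P=\m P^\top\m Z\succeq 0$, the same dissipation inequality $\dot V+\m\beta^\top\m\beta-\gamma\m w^\top\m w<0$, the S-procedure with the Lipschitz bound, and the same two linearizing substitutions $\m R=\m L^\top\m P$ and $\m P=\m X\m Z+\m Z^{\perp\top}\m Y$.

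The only noteworthy difference is organizational. The paper first derives the bilinear matrix inequality in $\m P,\m L$ together with the side constraint $\m Z^\top\m P=\m P^\top\m Z\succeq 0$, and then in a separate step \emph{constructs} the parametrization $\m P=\m X\m Z+\m Z^{\perp\top}\m Y$ (via auxiliary factorizations $\m{MZN}$ and a block decomposition of $\m M^{-\top}\m P\m N$) precisely to eliminate that equality constraint. You instead \emph{postulate} the parametrization up front and verify directly that $\m Z^\top\m P=\m Z^\top\m X\m Z$ is symmetric PSD, which is shorter and arguably cleaner. You are also more scrupulous than the paper about two points it glosses over: the invertibility of $\m P$ needed for $\m L=\m P^{-\top}\m R^\top$, and the fact that $V$ is only positive \emph{semi}definite so convergence of the algebraic component $\m e_a$ must be argued separately from $\dot V<0$. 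Neither point is addressed explicitly in the paper's proof.
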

\begin{proof}\textcolor{black}{
According to \cite[Theorem 2.1]{GuopingITCS2006} there exists an observer for a nonlinear differential algebraic model with nonlinearities as Lipschitz bounded (same as the one shown in Eq. \eqref{eq:final_DAE}) if there exists  two matrices $\m P\in\mbb{R}^{n\times n}$ and $\mQ\in \mbb{R}^{p\times n}$ such that the following matrix inequalities are solved 
\begin{subequations}\label{eq:LMI_core}
	\begin{align}
		\bmat{\m \Omega& \mP^\top \m B \\ \mB^\top \m P & \m {-I}} \prec 0 \label{eq:LMI_core_a} \\
		\m Z^\top \m P =\m P^\top \m Z \succeq 0 \label{eq:LMI_core_b}
	\end{align}
\end{subequations}
where  $\m \Omega = \m A^\top \m P+\m P^\top \m A+\m C^\top \m Q + \m Q^\top \m C + \m F^\top \m F$ and observer gain $\m L$ can be recovered as $\m L = \m P^{-\top} \m Q^\top$.  In a similar fashion we can design our observer for the NDAE model presented in \eqref{eq:NDAE_with_noise}. 
The overall proof is divided into two main steps as follows:
\begin{enumerate}[label=(\alph*)]
	\item Determining matrix inequalities based observer design.
	\item Converting the matrix inequalities to linear matrix inequalities (LMIs) so that they can be easily solved using commercially available SDP optimization solvers such as MOSEK \cite{Andersen2000}.
\end{enumerate}
\textcolor{black}{
	\textbf{\textit{(a):}} Let us consider a candidate Lyapunov function as $V(\m e)=\m e^\top \mZ^\top \mP \m e $, where $\m P\in\mbb{R}^{n\times n}$, $V:\mbb{R}^{n}\rightarrow \mbb{R}_+$, $\mZ^\top \mP = \mP^\top \mZ \succeq 0$, then its derivative can be written as
	\begin{align*}
		{\dot V}(\m e) &= (\m{Z}\dot{\m e})^\top \mP \m e+(\m{Z}{\m e})^\top (\mP \dot{\m e}).
	\end{align*}
	Since $\mZ^\top \mP = \mP^\top \mZ $, then we can write
	\begin{align*}
		{\dot V}(\m e) &= (\m{Z}\dot{\m e})^\top \mP \m e+(\mP \m e)^\top(\m{Z}\dot{\m e}).
\end{align*}}
Putting value of $\m{Z}\dot{\m e}$ from equation \eqref{eq:errorDyn_Hinf} yields
\begin{align*}
	{\dot V}(\m e) &= \left( \m{A}_c\m e +{\m F\Delta \m f( \m x,{\hat {\m x}})} + (\m{B}_w-\m{LD}_w)\m w\right)^\top\m{P e} +\\
	&{(\m P\m e)}^\top\left( \m{A}_c\m e +{\m F\Delta \m f( \m x,{\hat {\m x}})} + (\m{B}_w-\m{LD}_w)\m w\right)
\end{align*}
where $ \m{A}_c = (\m A-\mL\mC)$. 
Now for any bounded disturbances $\m w$ the $H_{\infty}$ stability condition is; ${\dot V}(\m e) + \m \beta^\top\m\beta - \gamma\m w^\top\m w < 0$ thus 
\begin{align*}
	(\m{A}_c\m e +&{\m F\Delta \m f( \m x,{\hat {\m x}})} + (\m{B}_w-\m{LD}_w)\m w)^\top\m{P e} +\\
	&{(\m P\m e)}^\top\left( \m{A}_c\m e +{\m F\Delta \m f( \m x,{\hat {\m x}})} + (\m{B}_w-\m{LD}_w)\m w\right) +\\
	& \hspace{4cm} \m \beta^\top\m\beta - \gamma\m w^\top\m w < 0.
\end{align*}
These equation can be rearranged and written as $\m\varPsi^\top\m\Theta\m\varPsi<0$. where
\begin{align*}
	\m\varPsi = \bmat{\m e\\\Delta \m f\\\m w}, \m\Theta=\hspace{-0.1cm}\bmat{ \m\Theta_{11} & * & * \\\mF^\top \m P & \m O &* \\
		(\m{B}_w-\m{LD}_w)^\top\m{P}&\m O&-{\gamma}\m I}
\end{align*}
and $ \m\Theta_{11}$ is given as
\begin{align*}
	\begin{split}
		\m\Theta_{11} = (\m A-\mL\mC)^\top \m P+\m P^\top  (\m A-\mL\mC)
		+\m \varGamma ^\top \m \varGamma.
	\end{split}
\end{align*}
Note that $\m\varPsi^\top\m\Theta\m\varPsi<0$ holds only if $\m\Theta\prec0$. Now from Eq. \eqref{eq:lipshitz} we know that the function $\m f(.)$ is Lipschitz bounded, meaning
\begin{align*}
	\norm{\Delta  {\m f(\m x,{\hat {\m x}})}}_2 \leq \norm {{\m G( \m x-{\hat{\m x}})}}_2\\
	\Leftrightarrow (\Delta {\m f( \m x,{\hat {\m x}})}^\top \Delta  {\m f(\m x,{\hat {\m x}})})- \m e^\top\mG^\top \mG \m e\leq 0
\end{align*}
which can be written as $\m\varPsi^\top\m \Xi\m\varPsi\leq0$, where
\begin{align*}
	\m \Xi = \mr{diag}\left( \bmat{-\mG^\top\mG&\mI&\mO}\right) 
\end{align*}
since $\m\varPsi^\top\m \Xi\m\varPsi\leq0$ for all admissible $\m\varPsi$ then it means $\m \Xi\prec0$. From S-Lemma \cite{Slemma}, $\m\Theta\prec0$ if there exists $\epsilon \geq 0 $ such that  $\m\Theta-(\epsilon)\m \Xi \prec0$. Thus the total matrix inequalities that we need to solve for observer design can be written as
\begin{subequations}\label{eq:matrix_ineq}
	\begin{align}
		\bmat{\m\Upsilon & * & * \\\mF^\top \m P & -\epsilon\mI &*\\
			(\m{B}_w-\m{LD}_w)^\top\m{P}&\m O&-{\gamma}\m I} &\prec 0 \label{eq:matrix_ineq1}\\
		\m Z^\top \m P=\m P^\top \m Z &\succeq 0 \label{eq:matrix_ineq2}
	\end{align}
\end{subequations}
%\vspace{-0.1cm}
where $\m\Upsilon$ is given as
\begin{align*}
	\begin{split}
		\m\Upsilon = (\m A-\mL\mC)^\top \m P+\m P^\top  (\m A-\mL\mC) + \epsilon\mG^\top\mG
		+\m \varGamma ^\top \m \varGamma.
	\end{split}
\end{align*}
\textit{\textbf{(b):}} Now to make \eqref{eq:matrix_ineq} strict linear matrix inequality, we need to: \textit{{(1)}} eliminate the product of $\mL$ and $\mP$ from \eqref{eq:matrix_ineq1} (because both $\mL$ and $\mP$ are variables and there product make the matrix inequality nonlinear),  and \textit{{(2)}} eliminate  \eqref{eq:matrix_ineq2} (because it has equality terms). To tackle \textit{{(1)}}, lets assume $\mR = \mL^\top\mP\in\mbb{R}^{p\times n}$, then the product of $\mL$ and $\mP$ in \eqref{eq:matrix_ineq1} can be replaced by $\mR$.
To deal with \textit{{(2)}}, let us assume there exists two matrices ${\m M}\in \mbb{R}^{n\times n}$ and ${\m N}\in \mbb{R}^{n\times n}$ such that 
\begin{align}\label{eq:proof-eq-1}
	\begin{split}
		\m{MZN} &= \bmat{\m I & \m O\\ \mO & \mO}, 
	\end{split}
	\begin{split}
		\m{M^{-\top} PN} &= \bmat{\m {P_1} & \m {P_2}\\ \m{P_3} & \m{P_4}} 
	\end{split}
\end{align}
where $\m {P_1} \in  \mbb{R}^{n_d\times n_d}$, $\m {P_2} \in  \mbb{R}^{n_d\times n_a}$, $\m {P_3} \in  \mbb{R}^{n_a\times n_d}$ and $\m {P_4} \in  \mbb{R}^{n_a\times n_a}$. Now from \eqref{eq:proof-eq-1} we can get 
\begin{subequations}\label{eq:proof-eq-2}
	\begin{align}
		\begin{split}\label{eq:proof-eq-2a}
			\m{Z} &= \mM^{-1}\bmat{\m I & \m O\\ \mO & \mO}\mN^{-1} 
		\end{split}\\
		\begin{split}\label{eq:proof-eq-2b}
			\m{P} &= \mM^\top\bmat{\m {P_1} & \m {P_2}\\ \m{P_3} & \m{P_4}}\mN^{-1}. 
		\end{split}
	\end{align}
\end{subequations}
From \eqref{eq:proof-eq-2a} and \eqref{eq:proof-eq-2b}, $\mZ^\top \mP$ and $\mP^\top \mZ$ is equal to
\begin{subequations}\label{eq:proof-eq-3}
	\begin{align}
		\begin{split}\label{eq:proof-eq-3a}
			\mZ^\top \mP = \mN^{-\top}\bmat{\m{P_1}&\mO\\\mO&\mO}\mN^{-1}
		\end{split}\\
		\begin{split}\label{eq:proof-eq-3b}
			\mP^\top \mZ = \mN^{-\top}\bmat{\m{P_1}^\top&\mO\\\m{P_2}&\mO}\mN^{-1}.
		\end{split}
	\end{align}
\end{subequations}
Now \eqref{eq:proof-eq-3a} and \eqref{eq:proof-eq-3b} can be made equal if $\m{P_1} = \m{P_1}^\top$ and $\m{P_2} = 0$. Hence, \eqref{eq:proof-eq-2b} can be written as
\begin{align*}
	\begin{split}
		\m{P} &\hspace{-0.0cm}= \hspace{-0.0cm}\mM^\top\bmat{\m {P_1} & \m {O}\\ \m{P_3} & \m{P_4}}\mN^{-1}\\
		&\hspace{-0.0cm}= \hspace{-0.0cm}\mM^\top\left( \bmat{\m {P_1} & \m {O}\\ \m{O} & \m{O}}+ \bmat{\m {O} & \m {O}\\ \m{P_3} & \m{P_4}}\right) \mN^{-1}
		\end{split}
	\end{align*}
\begin{align*}
		\begin{split}
		&\hspace{-0.0cm}= \hspace{-0.1cm}\mM^\top\left( \bmat{\m {P_1}\,\, \m {O}\\ \m{O} \,\,\,\, \m{I}}\bmat{\m {I} \,\,\,\, \m {O}\\ \m{O} \,\, \m{O}}\right) \mN^{-1}\hspace{-0.1cm} + \hspace{-0.1cm}\mM^\top\bmat{\m {O} \,\,\,\,\, \m {O}\\ \m{P_3} \,\, \m{P_4}}\mN^{-1}.
	\end{split}
\end{align*}
By defining $\mX$ as 
%\vspace{-0.1cm}
\begin{align*}
	\m{X} &= \mM^\top\bmat{\m {P_1} & \m {O}\\ \m{O} & \m{I}}\mM
\end{align*}
we can write the above equation for $\mP$ as 
\begin{subequations}\label{eq:proof-eq-final}
	\begin{align}
		\begin{split}
			\m{P} &= \m{XZ} + \underbrace{\mM^\top\bmat{\mO\\\mI}}_{\mZ^{\perp\top}}\underbrace{\bmat{\m{P_3}&\m{P_4}}\mN^{-1}}_{\mY}
		\end{split}\label{eq:proof-eq-finala}\\
		\begin{split}
			\m{P} &= \m{XZ}+{\mZ^{\perp\top}}{\mY}\label{eq:proof-eq-finalb}
		\end{split}
	\end{align}
\end{subequations}
\textcolor{black}{where $\mZ^{\perp\top}\in\mbb{R}^{n\times n}$ is the orthogonal  complement of $\mZ$ and $\mY\in \mbb{R}^{n_a\times n}$.  Hence, by plugging the value of $\mP$ from \eqref{eq:proof-eq-finalb} into \eqref{eq:matrix_ineq1} we can eliminate Eq. \eqref{eq:matrix_ineq2} which yields the strict LMI as shown in \eqref{eq:LMI_Hinf}}. This end the proof.}
\vspace{-0.2cm}
\end{proof}
In Theorem \ref{theorm:H_inf} we posed the calculation of observer gain  as a linear semidefinite optimization problem and thus $\mP_1$ can be solved using various off-the-shelf optimization solvers such as MOSEK \cite{LofbergICRA2004}. The calculated
observer gain $\mL$ ensures that the performance of error dynamics $\norm{{\m\beta }}^2_{L_2}$ or $\norm{{\m e }}^2_{L_2}$ 
is robust in $H_\infty$ sense as according to Definition \ref{def:H_inf}, or in other words it makes sure that $\norm{{\m\beta }}^2_{L_2}$ always lies within a circle of radius  $\gamma\norm{{\m w }}^2_{L_2}$ and origin at zero. Theorem \ref{theorm:H_inf} also guarantees that the estimation error dynamics \eqref{eq:errorDyn_Hinf} converges asymptotically to zero as $t\rightarrow\infty$. 

However, note that in optimization problem $\mathbf{P_1}$ there is no constraint on the magnitude of the observer gain $\mL$ and for practical reasoning, high gain observers are generally undesirable because they can increase the sensitivity of a system to disturbances. Moreover, albeit solving $\mathbf{P_1}$ ensures the stability of error dynamics, the convergence rate at which $\m {\hat x}$ approaches $\m x$ can be relatively poor. To that end, the optimization problem $\mathbf{P_1}$ can be improved with the following objectives: $(1)$ minimizing the maximum eigenvalues of $\mZ^\top\mX\mZ$. This is because in the proof of Theorem \ref{theorm:H_inf} we have assumed the Lyapunov candidate function to be $\m V = \m e^\top\m Z^\top\mP \m e^\top$ and thus to increase the convergence rate one can minimize the maximum eigenvalues of $\m Z^\top\mP$, and as $\m{P} = \m{XZ}+{\mZ^{\perp\top}}{\mY}$, thus, $\m Z^\top\mP = $ $\m Z^\top(\mX\mZ+\mZ^{\perp\top}\mY)=\mZ^\top\mX\mZ$, and $(2)$ minimizing the norm of matrix $\mR$ to get observer gain matrix $\mL$ of reasonable magnitude. Thus, the overall optimization problem we seek to solve can be written as
\begin{align*}
	\mathbf{\left( P_2\right) }\;\minimize_{{\epsilon},\kappa,\gamma, \m X,\m R, \m Y}  \;\;\;& c_1\kappa+c_2\gamma+c_3\norm{\m R}_2\\ \subjectto \;\;\; & \mr{LMI}\; \eqref{eq:LMI_Hinf},\;\m X \succ 0,\;{\epsilon}> 0,\gamma > 0,\kappa>0,\\ & \kappa\mI-\mZ^\top\mX\mZ\succ 0
\end{align*}
where $c_1, c_2$ and $c_3$ are predefined weighting constants and their values can be adjusted based on the specific requirements. For example, if high convergence is desired as compared to other variables then the value of $c_1$ can be increased, vice versa.

It is worth mentioning that the observer proposed in this study is different from those proposed in \cite{GuopingITCS2006} and \cite{PhamICTSL2019}. As compared to \cite{GuopingITCS2006} we are minimizing $\kappa$,  $\gamma$, and $\mR$ which ensures, quick convergence, robust performance and observer gain of reasonable magnitude. Similarly, in \cite{PhamICTSL2019} although $H_\infty$ stability notion is used to handle disturbances, the presented observer is not able to handle unknown control inputs, while in this study proportional integral notion (as discussed in the following section) is used which provide robust estimation in the presence of unknown control inputs. 
\vspace{-0.2cm}
\subsection{Tackling Unknown Control Inputs}\label{section:robust_B}
\vspace{-0.1cm}
The observer designed in Section~\ref{sec:Robust_sec_A} deals with uncertainty from process and measurement noise as well as renewables and loads. Herein, we deal with the deviations associated with generator's control inputs. To that end, the NDAE dynamics can be rewritten as
	\begin{align}\label{eq:NDAE_with_UI_noise}
		\m Z\dot{{\m x}}\hspace{-0.1cm} &=\hspace{-0.1cm} {\m A}{\m x}\hspace{-0.1cm} + \hspace{-0.1cm} {\m F}{\m f}\left(\m x\right)\hspace{-0.1cm}+ \hspace{-0.1cm} \m B_u \Delta \m u\hspace{-0.1cm} +\hspace{-0.1cm} \m B_u \bar{\m u} \hspace{-0.1cm}+ \hspace{-0.1cm}\m B_q \bar{\m q} \notag 
		+{\m {B}_w} {{\m w} }\hspace{-0.1cm}+\hspace{-0.1cm} {\m H} \omega_{0}\\
		\m y &= \m C\m x  + {\m {D}_w} {{\m w} }.
	\end{align}

Motivated by \cite{WuIETIE2020,XuITVT2014,Sffker1995StateEO},  we propose a \textit{proportional integral} (\text{PI}) based framework to minimize the error arising due to the unknown inputs $\Delta \m u(t)$. The idea is basically that, instead of just estimating the system states through the observer we define $\Delta \m u(t)$ also as a state and then in observer design we determine two observer gains, \textit{(1)} Proportional gain which minimizes the error between original and estimated states, and \textit{(2)} Integral gain which compensate the estimation error dynamics for any inaccuracy caused by $\Delta \m u(t)$. Further details and working of PI-type Luenberger observers can be seen in \cite{WuIETIE2020,XuITVT2014,Sffker1995StateEO}. With that in mind the augmented NDAE, which is reformulation of \eqref{eq:NDAE_with_UI_noise} can be written as
\begin{align*}
	\begin{split}
		\underbrace{\bmat{{\m Z}&\m O \\ \m O&\m I}}_{\m Z_\varrho}\hspace{-0.05cm}\underbrace{\bmat{\dot{{\m x}} \\ {{\Delta \dot{\m u}}}}}_{\dot{{\m \varrho}}} \hspace{-0.1cm}&=\hspace{-0.1cm}\underbrace{\bmat{{\m A}&\m {B_u} \\ \m O &{\m O}}}_{\m {A_}\varrho}\hspace{-0.05cm}\underbrace{\bmat{{\m x} \\ {\Delta {\m u}}}}_{{\m \varrho}}\hspace{-0.05cm}+ 
		\hspace{-0.05cm}\underbrace{\bmat{{\m {F}} \\ \m O}}_{\m {F_}{\varrho}}\hspace{-0.05cm} {\m f(\m {x})}+
		\hspace{-0.05cm}\underbrace{\bmat{{\m B_u} \\ \m O}}_{\m {B}_{u\varrho}}\hspace{-0.05cm} { \bar{\m u}} \\&\,+ \hspace{-0.05cm}\underbrace{\bmat{{\m B_q} \\ \m O}}_{\m {B}_{q\varrho}}\hspace{-0.05cm} { \bar{\m q}} + \hspace{-0.05cm}\underbrace{\bmat{{\m {B_w}} \\ \m O}}_{ {\m B_{\m w,\varrho}}}\hspace{-0.05cm} {\m w} +\hspace{-0.05cm} \underbrace{\bmat{ {\m H} \\ \m O }}_{{\m H_\varrho}}\hspace{-0.05cm} \omega_0\\
		\m y &\,= \underbrace{\bmat{{\m C}&\m O}}_{\m C_\varrho}\hspace{-0.05cm}\underbrace{\bmat{{{\m x}} \\ {{\Delta {\m u}}}}}_{{{\m \varrho}}} +  \mD_w\m w.	 
	\end{split}
\end{align*}
The augmented system can be written in a compact form as follows
\begin{subequations}\label{eq:NDAE_PI}
	\begin{align*} 
		\begin{split}
			\m {Z}_\varrho\dot{{\m \varrho}} \hspace{-0.05cm}&=\hspace{-0.05cm} {\m {A}_\varrho}{{\m \varrho}}\hspace{-0.05cm} + \hspace{-0.05cm} {\m {F}_\varrho}{\m f}\left(\m x\right)\hspace{-0.05cm} +\hspace{-0.05cm} {\m {B}_{w,\varrho}}{{\m {w}} }\hspace{-0.05cm} \hspace{-0.05cm}+\hspace{-0.05cm}\m{B}_{q\varrho}{{\bar{\m q}}} +\m{B}_{u\varrho}{{\bar{\m u}}}+{\m {H}_\varrho} \omega_{0}
		\end{split}\\
		\begin{split}
			{\m y} \hspace{-0.05cm}&=\hspace{-0.05cm} \m C_{\varrho} {\m \varrho}+{{\m D_{w}}} {{\m w}}.
		\end{split} 
	\end{align*}
\end{subequations}
In the augmented system defined above the dynamics of unknown inputs are defined as $\Delta \dot{\m u}= \m{\psi}\Delta\m u$, where $\m{\psi}$ is a constant matrix and can be constructed based on the knowledge of unknown inputs. However, if the dynamics are completely unknown then $\m{\psi}= \m O$ can provide sufficient compensation to the estimation error dynamics \cite{Sffker1995StateEO}. \textcolor{black}{Notice that by choosing  $\m{\psi}= \m O$ we assume that the rate of change in the unknown inputs is negligible. }This allows the augmented system to be written as
\begin{align*}
	\begin{split}
		\underbrace{\bmat{{\m Z}&\m O \\ \m O&\m I}}_{\m Z_\varrho}\hspace{-0.05cm}\underbrace{\bmat{\dot{{\hat{\m x}}}\\{{\Delta \dot{\hat{\m u}}}}}}_{\dot{{\m \varrho}}} \hspace{-0.1cm}&=\hspace{-0.1cm}\underbrace{\bmat{{\m A}&\m {B}_u\\ \m O &{\m O}}}_{\m {A}_\varrho}\hspace{-0.05cm}\underbrace{\bmat{\hat{\m x} \\ \Delta{\hat{\m u}}}}_{{\m \varrho}}\hspace{-0.05cm}+ 
		\hspace{-0.05cm}\underbrace{\bmat{{\m {F}} \\ \m O}}_{\m {F}_{\varrho}}\hspace{-0.05cm} {\m f(\hat{ \m x})}+
		\hspace{-0.05cm}\underbrace{\bmat{{\m B_u} \\ \m O}}_{\m {B}_{u\varrho}}\hspace{-0.05cm} {\bar{\m u}} \\&\,+ \hspace{-0.05cm} \underbrace{\bmat{ {\m L_{\varrho,P}} \\ {\m L_{\varrho,I}}}}_{{\m L_\varrho}}\hspace{-0.05cm} (\m y - \hat {\m y}) + \hspace{-0.05cm}\underbrace{\bmat{{\m B_q} \\ \m O}}_{\m {B}_{q\varrho}}\hspace{-0.05cm} { \bar{\m q}} + \hspace{-0.05cm} \underbrace{\bmat{ {\m H} \\ \m O }}_{{\m H_\varrho}}\hspace{-0.05cm} \omega_0 \\
		\hat{\m y} &\,= \underbrace{\bmat{{\m C}&\m O}}_{\m C_\varrho}\hspace{-0.05cm}\underbrace{\bmat{{\hat{\m x}} \\ \Delta{\hat{\m u}}}}_{{{\m \varrho}}}	 
	\end{split}
\end{align*}
which can be simplified to
\begin{align}\label{eq:obsrDyn_PI}
	\begin{split}
	\hspace{-0.5cm}	\m {Z}_\varrho\dot{\hat {\m\varrho}}\hspace{-0.05cm} &=\hspace{-0.05cm} {\m {A}_\varrho}{\hat{\m \varrho}}\hspace{-0.05cm} + \hspace{-0.05cm} {\m {F}_\varrho}{\m f}\left(\hat{\m x}\right) \hspace{-0.05cm}+\hspace{-0.05cm}\m{B}_{u\varrho}{{\bar{\m u}}}\hspace{-0.05cm}+\hspace{-0.05cm} \m L_\varrho \Delta \m y \hspace{-0.05cm}
	+\hspace{-0.05cm}\m{B}_{q\varrho}{{\bar{\m q}}} \hspace{-0.05cm}+ 
	\hspace{-0.05cm}{\m H_\varrho} \omega_{0}\hspace{-0.4cm}\\
	\hat{\m y} &= \m C_\varrho \hat{\m \varrho}
	\end{split}
\end{align}
with $ \m e_\varrho = \m \varrho - \hat{\m \varrho}$ the estimation error dynamics can be derived in the same fashion as done in \eqref{eq:errorDyn} and can be written as
\begin{align}\label{eq:errorDyn_PI}
	\m Z_\varrho\dot{{\m e}}_\varrho &= \m A_{c,\varrho}\m e_\varrho + \m F_\varrho \Delta \m {f(x)}+ (\mB_{w,\varrho} - \mL_\varrho \mD_{w,\varrho})\m w
\end{align}
where $\m A_{c,\varrho}= (\m A_\varrho -\m L_\varrho \m C_\varrho)$. As the augmented error dynamics \eqref{eq:errorDyn_PI} retains the same structure as \eqref{eq:errorDyn_Hinf}, optimization problem $\mathbf{P_2}$ can be easily solved for the  observer dynamics presented in \eqref{eq:obsrDyn_PI}.
Note that the overall observer gain $\mL$ is now a combination of two gains $\mL_{e,P}$ and $\mL_{e,I}$ and it provides the following three main benefits: \textit{(1)} $\mL_{e,P}$ ensures that the performance of error  dynamics is robust against disturbances in $H_\infty$ sense as discussed in Section \ref{section:robust_obs_design}, \textit{(2)} $\mL_{e,P}$ also ensures that the error dynamics \eqref{eq:errorDyn_PI} converges asymptotically near to zero as $t\rightarrow\infty$, and \textcolor{black}{ \textit{(3)} $\mL_{e,I}$ compensates error dynamics \eqref{eq:errorDyn_PI} in the case when the observer does not know the exact values of control inputs. Or in other words $\mL_{e,I}$ try to remove any steady state error caused by the unknown inputs.}
\section{Case studies}\label{section:simulations}
%\vspace{-0.1cm}
The proposed approach is tested on western electricity coordinating council (WECC) system. The data for this test system has been taken from MATPOWER\cite{5491276} with a file name \texttt{case9}. To assess the performance and applicability of the proposed NDAE observer in tracking both dynamic and algebraic variables, DSE has been performed under different dynamic conditions and with changing loads and renewables. The simulations have been performed using MATLAB R$2021$b running on $64$-bit windows $10$ using Intel core i$9$-$11980$HK CPU with $64$GB RAM. Both the power system and observer NDAE models are solved using MATLAB index one DAEs solver \texttt{ode15i}, while all the convex SDPs optimizations are carried out in YALMIP \cite{LofbergICRA2004} using MOSEK \cite{Andersen2000} as a solver. To compute the Lipschitz matrix $\mG$, numerical method described in \cite[Section IV]{SebastianACC2019} is used to calculate Lipschitz constants for each nonlinearity in each state equation separately and then all of them are lumped in one diagonal matrix $\mG$, which is used throughout the case studies.   

For all the case studies the observer dynamics given in \eqref{eq:obsrDyn_PI} are simulated and the observer gain is calculated by solving optimization problem $\mathbf{P_2}$ with $c_1=1$, $c_2=1$ and $c_3=1/3$. All the states of the observer are initialized with random  initial conditions having $10\%$ maximum deviation from the steady state values of the power system except for generators speed which is kept the same as generator synchronous speed $w_0$. Three PMUs are installed at Buses $4$, $6$ and $8$, which are sufficient for the overall observability of WECC system.
The initial conditions and steady state values for the NDAE model are computed using power flow solution obtained from MATPOWER while the generators parameters are extracted from PST case file name \texttt{datane.m} \cite{sauer2017power}. Here we set $\omega_{0} = 2\pi60$ $\mr{{rad/sec}}$ and the power base is considered as $100$ MVA. Dynamic response of the power system has been achieved by introducing a fault at $\mr{t = 25sec }$ on line $4-9$ which is then cleared at $50$ $\mr{msec}$ and $200$ $\mr{msec}$ from near and remote end. 

\textcolor{black}{
The settings for \texttt{ode15i} is chosen to be: (1) relative tolerance $=1\times10^{-4}$ (2) absolute tolerance $=1\times10^{-5}$  and (3) maximum step size $=1\times10^{-3}$. Similarly for MOSEK we select: (1) SDP positive semidefinite constant = $1\times10^{-3}$, (2) maximum relative dual bound $1\times10^{-9}$, (3) maximum absolute dual bound $ = 1\times10^{-4}$, and (4) maximum absolute primal bound $ = 1\times10^{-5}$. The rest of the settings are kept to their default values.}
\vspace{-0.1cm}
\begin{algorithm}[t]
	\caption{\text{Implementation of the NDAE observer}}\label{alg:NDAE obsrvr}
	\DontPrintSemicolon 
	\textcolor{black}{Extract network description and generator parameters from MATPOWER and PST\;
		Generate matrices ${\m A}$, ${\m F}$, ${\m B}_u$, ${\m B}_q$, ${\m B}_w$ and ${\m H}$\;
		Select buses for the PMUs placements and compute $\m C$ matrix accordingly\;
		Create augmented state space model for both power system and observer dynamics\;
		Solve $\mathbf{P_2}$ and compute observer gain matrix $\m L_{\varrho}$\;
		Simulate power system under transient conditions,  initialize observer and use $\m L_{\varrho}$ to perform DSE}\;
	%\vspace{-0.1cm}
\end{algorithm}
\vspace{-0.1cm}
\subsection{Case 1: DSE Under Gaussian and Non-Gaussian Noise}\label{Section:simulations_A}
We first analyze the performance of the observer under Gaussian process and measurement noise. We impose a Gaussian measurement noise having a diagonal covariance matrix and with error variance of $0.001^{2}$ and Gaussian process noise having also diagonal covariance matrix with each entry equal to $5\%$ of largest state change. Note that, there are no unknown inputs or load uncertainties and the observer knows the exact values of inputs and load demands in realtime. The results are presented in Fig. \ref{fig:Gaussian_Noises_dynamic}. For brevity, only Generator 1 state estimation results are shown. We can see that although the observer started from different initial conditions and is not aware of the Gaussian noise it is able to track all the states and thus making estimation error norm \eqref{eq:errorDyn_PI} asymptotically zero as shown in Fig. \ref{fig:error_norm}.

It is also important to analyze the performance of the observers under non-Gaussian noise, because in \cite{8067439} it has been shown that measurement noise cannot be always considered as Gaussian. Thus Cauchy noise has been generated by setting the noise vector $w_{mi} = a+b(\pi(R-0.5))$, where $a =0$, $b= 1\times 10^{-7}$ and $R$ is a random variable inside $(0,1)$. Instead of Gaussian noise, $w_{mi}$ has been added to the PMU measurements while the rest of all the setting for the observer has been kept the same. Similar results have been achieved as presented in Fig. \ref{fig:Gaussian_Noises_dynamic}. This can be validated from the estimation error norm given in Fig. \ref{fig:error_norm}. We can see that thanks to the $H_{\infty}$ based design the observer is still able to track all the original states and thus driving the error norm asymptotically to zero. 
\begin{figure}
	\includegraphics[keepaspectratio,scale=0.60]{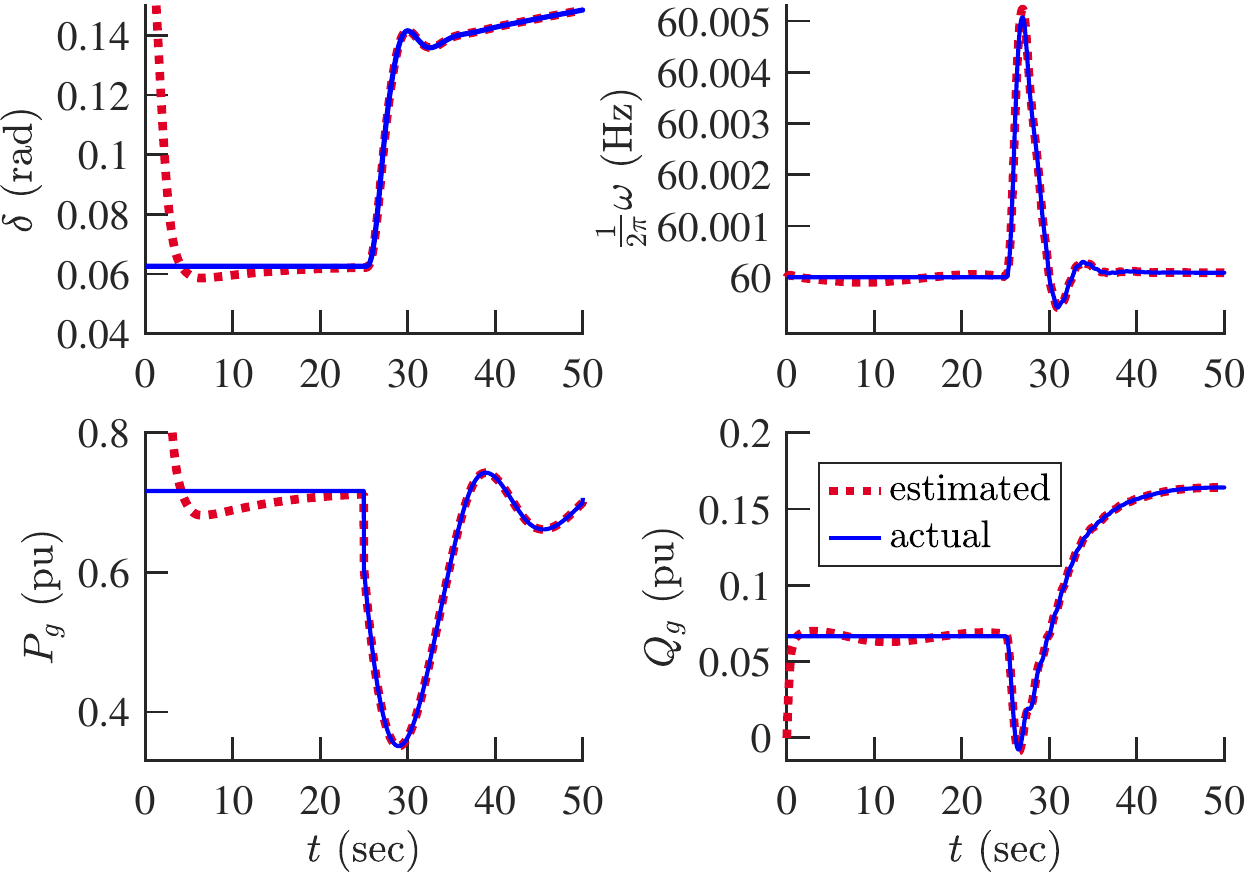}\vspace{-0.0cm}\caption{State estimation results for Generator 1 under Gaussian process and measurement noise.}
	\label{fig:Gaussian_Noises_dynamic}
	%\vspace{-0.1cm}
\end{figure}
\vspace{-0.1cm}
\subsection{Case 2: DSE Under Unknown Control Inputs}\label{Section:simulations_C}	\vspace{-0.1cm}
For brushless synchronous generator's it is difficult to measure $T_{\mathrm{M}}$ and $E_{\mathrm{fd}}$. To that end in this section we consider the case when the observer is supplied only steady state values of the control inputs and the actual inputs are kept unknown to the observer. Also, Gaussian noise has been added to the system and PMUs measurement. Rest of all the settings for the observer and power network model are kept the same as in Section \ref{Section:simulations_A}. From Fig. \ref{fig:UIs} we can see that the observer is still able to provide accurate estimate of both both algebraic and dynamic  states. These observations can also be validated from the estimation error norm given Fig. \ref{fig:error_norm}.
\vspace{-0.1cm}
\subsection{Case 3: DSE Under Uncertainty from Loads/Renewables}\label{Section:simulations_D}
\vspace{-0.1cm}
In this section we discuss the performance of our observer with load and renewable disturbances. The simulations for this section have been performed as follows

Initially the power network operates with total load demand of $\left(P_{\mr{L}}^0+Q_{\mr{L}}^0\right)$ and total renewables power generation as $P_{\mr{R}}^0 = 0.2P_{\mr{L}}^0$. Then right after $t>0$ the power generated from renewables and total real system load demand experiences a step disturbances. There updated values are as $P_{\mr{R}}^d = (P_{\mr{R}}^0 + \Delta P_{\mr{R}}^0)$ and $P_{\mr{L}}^d = (P_{\mr{L}}^0 - \Delta P_{\mr{L}}^0)$, where we have set $\Delta P_{\mr{R}}^0 = 0.03P_{\mr{R}}^0$  and $\Delta P_{\mr{L}}^0 = 0.01P_{\mr{L}}^0$. Moreover, to account for random load and renewables variations, we also assume that these step disturbances contain noise, such that $P_{\mr{R}}^d = (P_{\mr{R}}^0 + \Delta P_{\mr{R}}^0) + q_r(t)$ and $P_{\mr{L}}^d = (P_{\mr{L}}^0 - \Delta P_{\mr{L}}^0) +q_l(t)$, where $q_r(t)$ and $q_l(t)$ are Gaussian noise with zero mean and variance of $0.002\Delta P_{\mr{R}}^0$ and $0.002\Delta P_{\mr{L}}^0$ respectively. All these disturbances are lumped in vector $\m w$. 

To perform DSE the initial conditions for the observer, number of PMUs, and level of process and measurement noise are kept the same as in Section \ref{Section:simulations_A}. The observer is only supplied with the steady state values of loads, thus the observer is kept completely unaware of the disturbances in loads and renewables and also process and measurement noise. Optimization problem $\mathbf{P_2}$ for observer dynamics \eqref{eq:obsrDyn_PI} has been solved. The estimation results are shown in Fig. \ref{fig:unkown loads and gaussian} while the estimation error norm is presented in Fig. \ref{fig:error_norm} and from which we can see that the observer is driving the error norm asymptotically near zero in less then $20s$.

\textcolor{black}{To further analyze the performance of the proposed observer against disturbances from load and renewables, we gradually increase (from $3\% - 15\%$) the amount of uncertainty from load and renewables and root
mean square error (RMSE) (Eq. \eqref{eq:RMSE}) between actual and estimated states has been recorded for each of the scenarios. We notice that for $3\%$ step disturbance from load and renewables the value of RMSE is $0.403$, similarly for $6\%, 8\%, 10\%$, and $15\%$ the RMSE values are $0.424$, $0.519$, $0.508$, and $0.619$ respectively. We can clearly see that by increasing the amount of uncertainty the value of RMSE still remains less than zero and thus the observer is providing good estimation results. This is because in $H_\infty$ based observer design the observer always tries to drive the nonlinear NDAE model of error dynamics \eqref{eq:errorDyn_Hinf} near zero without having any statistical knowledge about the uncertainty. This is one of the key advantage of the proposed methodology over Kalman filter based state estimation techniques.  } 
\begin{figure}
	\centering
	\includegraphics[keepaspectratio=true,scale=0.65]{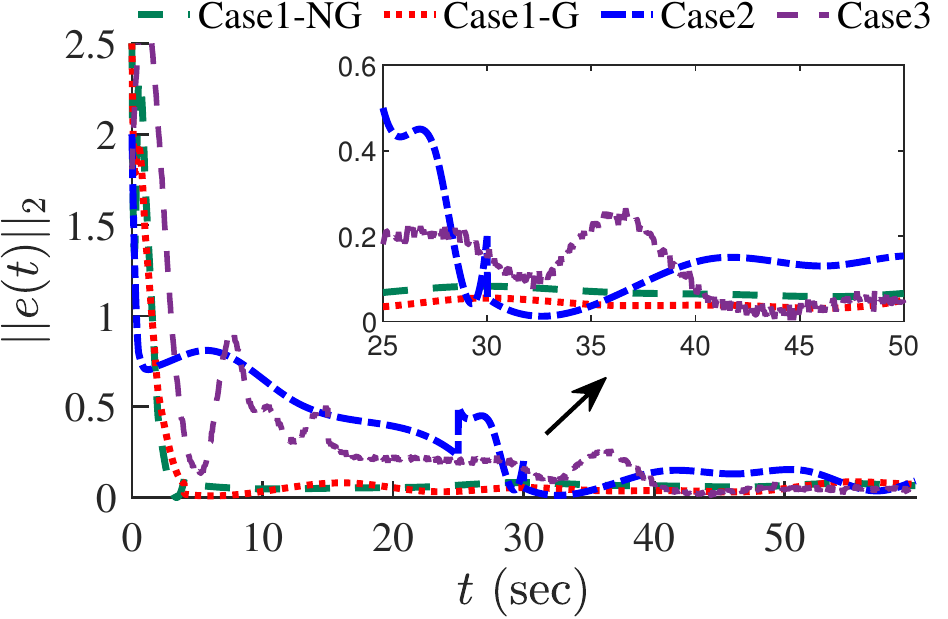}\vspace{-0.1cm}\caption{Comparison of estimation error norm for all the case studies. Case1-G and Case1-NG represent Case 1 with Gaussian and non-Gaussian noise respectively. }\label{fig:error_norm}
\end{figure}
\begin{figure}
		\centering
		\includegraphics[keepaspectratio=true,scale=0.60]{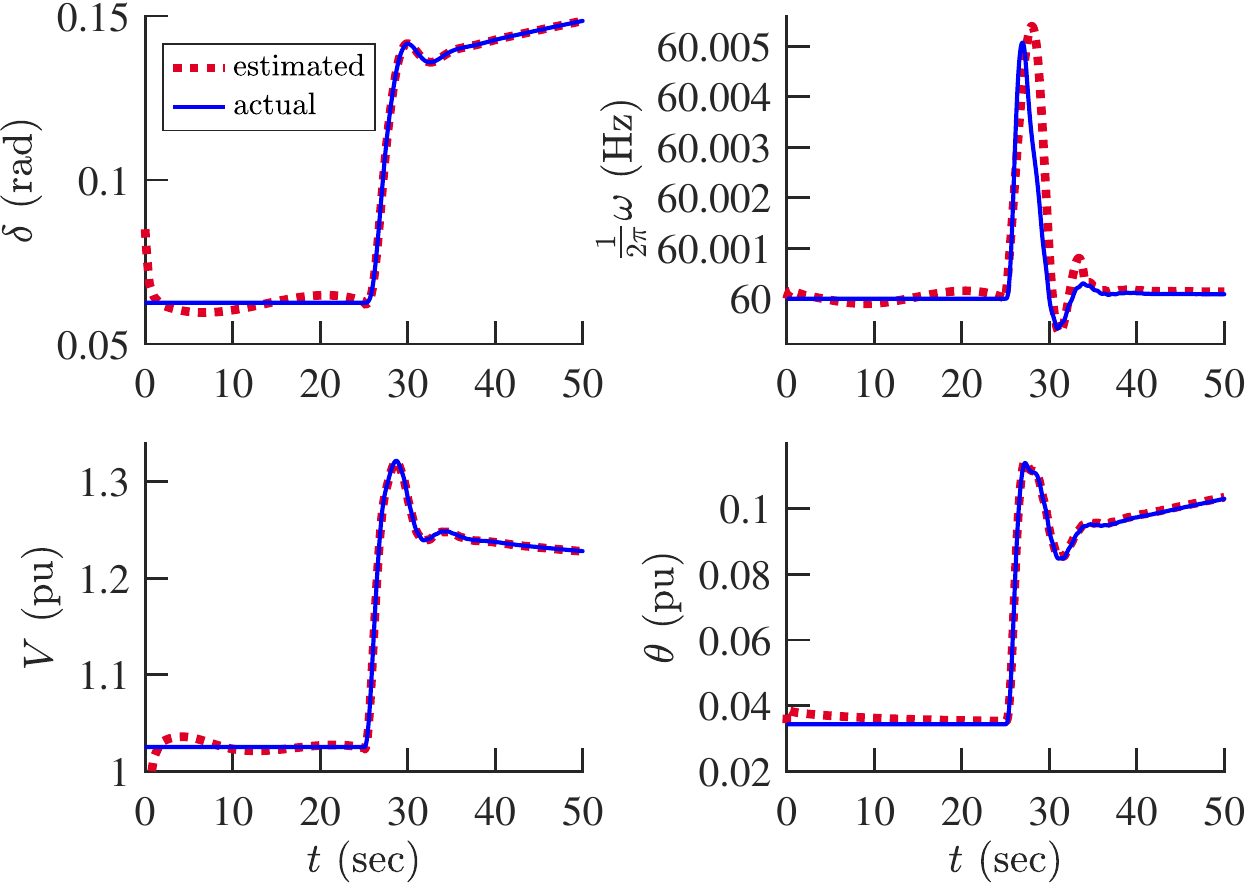}\vspace{-0.0cm}\caption{Estimates of dynamic states of Generator 1 and algebraic variables of Bus 6, with unknown control inputs.}\label{fig:UIs}
		\vspace{-0.1cm}
\end{figure}
\begin{figure}
	\centering 
	\subfloat{\includegraphics[keepaspectratio=true,scale=0.580]{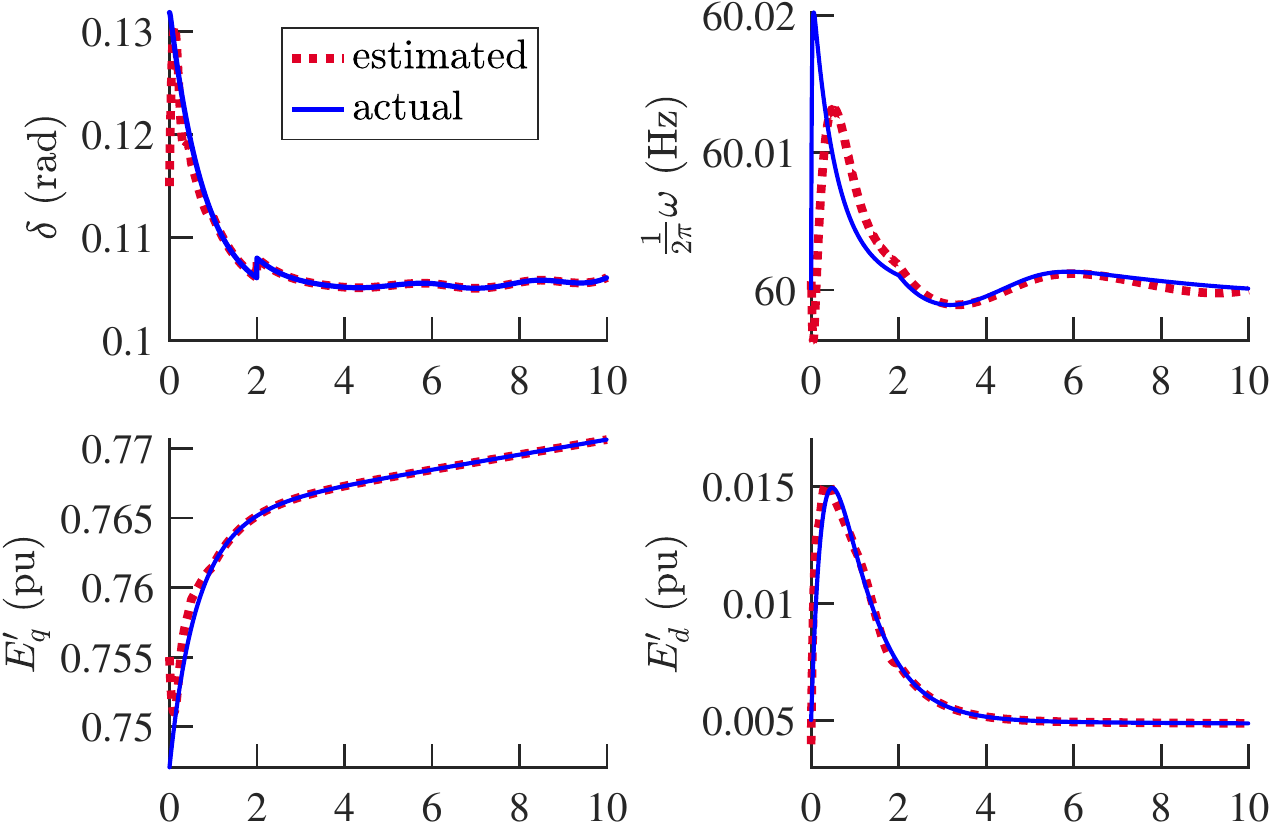}}{}{}\vspace{-0.1cm}
	\subfloat{\includegraphics[keepaspectratio=true,scale=0.580]{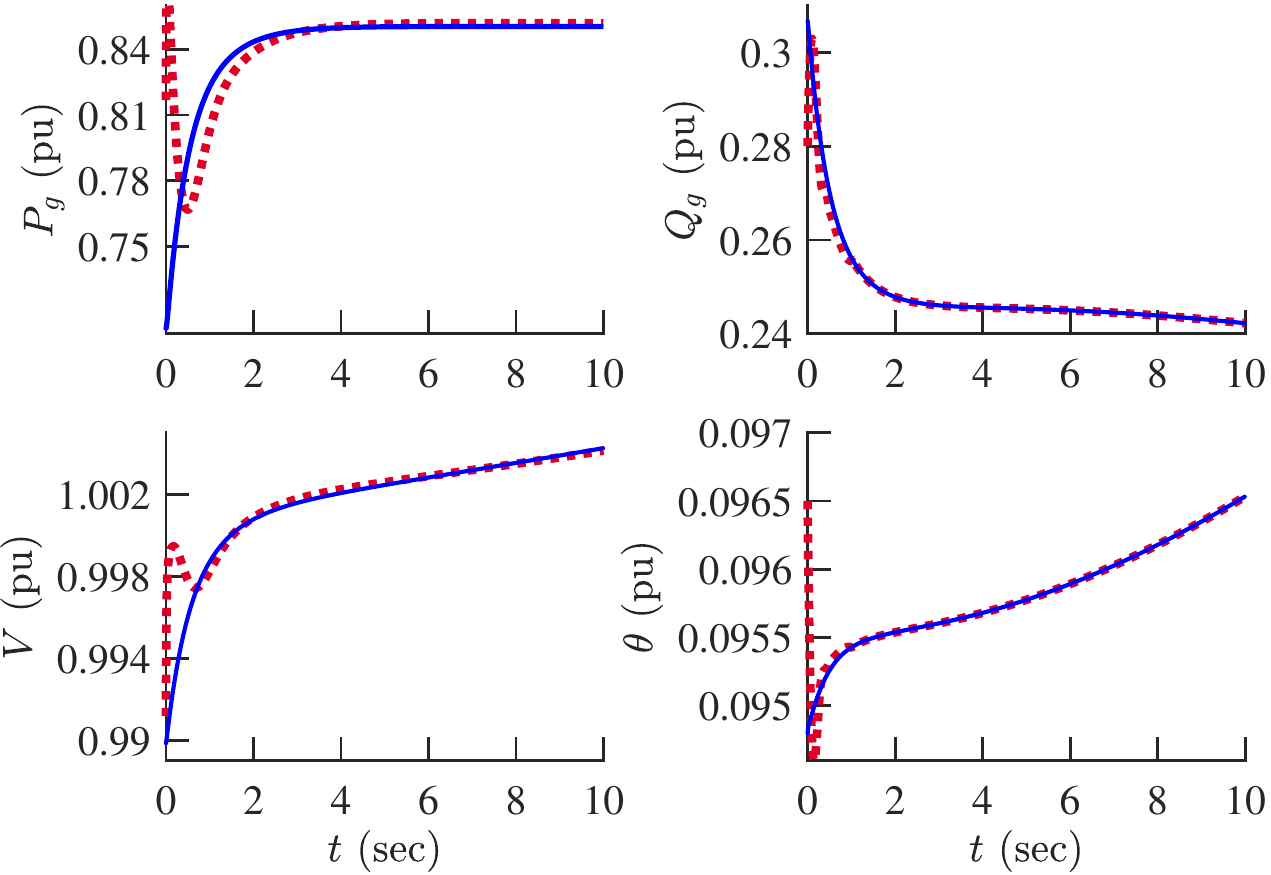}}{}{}\hspace{-0.1cm}\vspace{-0.1cm}
	\caption{Estimates of dynamic states of Generator 2 and algebraic variables of Bus 7 under loads and renewables uncertainties.}
	\label{fig:unkown loads and gaussian}
	\vspace{-0.3cm}
\end{figure}
%\vspace{-0.3cm}
\subsection{Comparison With the Decoupled Two Step Approach}
\vspace{-0.1cm}
In this section, we present the comparison of the NDAE observer proposed in this study with the two-stage techniques ---see \cite{ZhangITSE2014,RouhaniITSG2018,RinaldiINPROCEEDINGS2017} present in the literature. The two-step technique has been applied in this study as follows. In the first step least absolute value (LAV) method \cite{GolIETSG2014} based linear phasor estimator is used to approximate system algebraic variables $\m x_a$ and then in the second stage the estimated $\m x_a$ along with EKF/UKF/HEKF is used to estimate generator dynamic states $\m x_d$. Due to the decoupled two-step nature, this method can only be applied in a discrete time. Thus the  NDAE model \eqref{eq:final_DAE} is converted to discrete time model using first order Taylor approximation \cite{KAZANTZIS1999763}.

In this decoupled two-step approach (also referred to as LAV+EKF/UKF/HEKF) at each time step noisy PMUs measurements $\m y[k]$ are sampled and then the following optimization problem is carried out \cite{GolIETSG2014}
\begin{align*}
	\mathbf{\left( P_3\right) }\;\minimize_{{\m x_a},\m w_m} &\;\;\; \sum_{j=1}^p\norm {\m w_m}\\ \subjectto & \;\;\;\m y[k] = \tilde{\mC}\m x_a + \m w_m
\end{align*} 
where $\m x_a$ denote the algebraic variables and $\m w_m$ is the measurement noise. \textcolor{black}{After solving $\mathbf{P_3}$ a simple EKF \cite{Greg}, HEKF \cite{ZhaoITWPRS2018} and UKF \cite{GhahremaniITPWRS2011} has been implemented to estimate generator dynamic states. Mechanical torque $\m T_{\mathrm{M}}$ and field voltage $\mE_{\mathrm{fd}}$ are considered as inputs while the estimated algebraic variables from the first stage are also used to aid EKF, HEKF and UKF in estimating generator dynamics.}
 
To generate the dynamic response of the system a fault has been introduced at $t=11s$ on Line $5-6$. Two case studies are considered for the comparison. The first scenario is based on Case 1 (see Section \ref{Section:simulations_A}) in which Gaussian process and measurements noise are considered and it has been assumed that the accurate values of the control inputs $(T_{\mathrm{M}} $ and $E_{\mathrm{fd}})$ are available to the observer in realtime. The second case study is based on Case 2 (see Section \ref{Section:simulations_C}) where the observers have only been supplied the steady state values of the control inputs and the realtime transient values of the control inputs are kept unknown to the observers. The process and measurement noise covariance matrices and the rest of all the settings are kept the same as discussed in Case 1 and Case 2.
\vspace{-0.0cm}
\begin{table*}[t]
	\color{black}	\footnotesize	\centering 
	\caption{\textcolor{black}{Comparison of RMSE and computational time $\Delta t (s)$  for the NDAE observer and LAV+EKF/HEKF/UKF }}
	\begin{threeparttable}
		\begin{tabular}{p{8em}|c|c|c|c|c|c|c|c}
			\toprule
			\multirow{2}[4]{*}{DSE technique} & \multicolumn{2}{c|}{Case 1, Gaussian noise} & \multicolumn{2}{c|}{Case 1, non-Gaussian noise} & \multicolumn{2}{c|}{Case 2} & \multicolumn{2}{c}{Case 3} \\
			\cmidrule{2-9}    \multicolumn{1}{c|}{} & {$\Delta t$ (s)} & RMSE  & $\Delta t$ (s) & RMSE  & {$\Delta t$ (s)} & RMSE  & $\Delta t$ (s) & RMSE \\
			\midrule
			NDAE observer\tnote{$\dagger$} & 10.02319 & 0.03101 & 10.22451 & 0.06224 & 11.22403 & 0.1773 & 10.0701 & 0.3328 \\
			\midrule
			LAV+UKF & 251.9431 & 0.06324 & 237.3417 & 1.19413 & 249.7481 & 0.35261 & 249.4154 & 1.36917 \\
			\midrule
			LAV+EKF & 435.4125& 0.06391 & 465.3120 & 1.14389 & 439.8374 & 0.3605 & 440.1892 & 1.45101 \\
			\midrule
			\textcolor{black}{LAV+HEKF} & \textcolor{black}{438.4012}& \textcolor{black}{0.06411} & \textcolor{black}{480.2081} & \textcolor{black}{0.81089} & \textcolor{black}{465.9084} & \textcolor{black}{0.2691} & \textcolor{black}{410.9012} & \textcolor{black}{1.02310}\\
			\bottomrule
		\end{tabular}\label{tab:comp_time}%
		\begin{tablenotes}
			\item[$\dagger$] The computational time for the NDAE observer includes the time needed to calculate the observer gain matrix $\mL$. It took $5.408s$ for the optimization problem $\mathbf{P_2}$ to be solved in YALMIP with MOSEK as a solver. 
		\end{tablenotes}
	\end{threeparttable}
\vspace{-0.1cm}
\end{table*}%
\vspace{-0.0cm}
The simulation results are presented in Figs. \ref{fig:Comp_UI}, and \ref{fig:Comp_Gauss_delta}. \textcolor{black}{ We can see from Fig. \ref{fig:Comp_Gauss_delta} that in the first scenario  of the comparative study albeit all the observers can give relatively good estimation results, the performance of the NDAE observer is better as compared to the LAV+EKF/HEKF/UKF. We can also see that the performance of both  LAV+EKF/UKF is almost similar.} These observations are also supported from Fig. \ref{fig:Comp_error_norms_Gauss} where the estimation error norm for all the observers has been plotted, we can see that the NDAE observer has the least steady state error norm as compared to the $2-$stage observers.
\begin{figure}
	\centering 
	\subfloat{\includegraphics[keepaspectratio=true,scale=0.55]{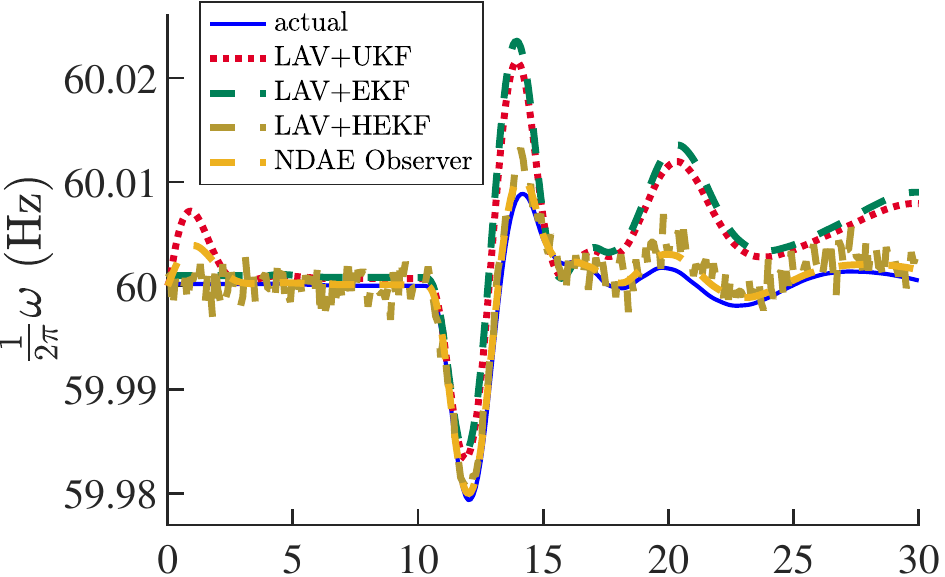}}{}{}\vspace{-0.1cm}
	\subfloat{\includegraphics[keepaspectratio=true,scale=0.55]{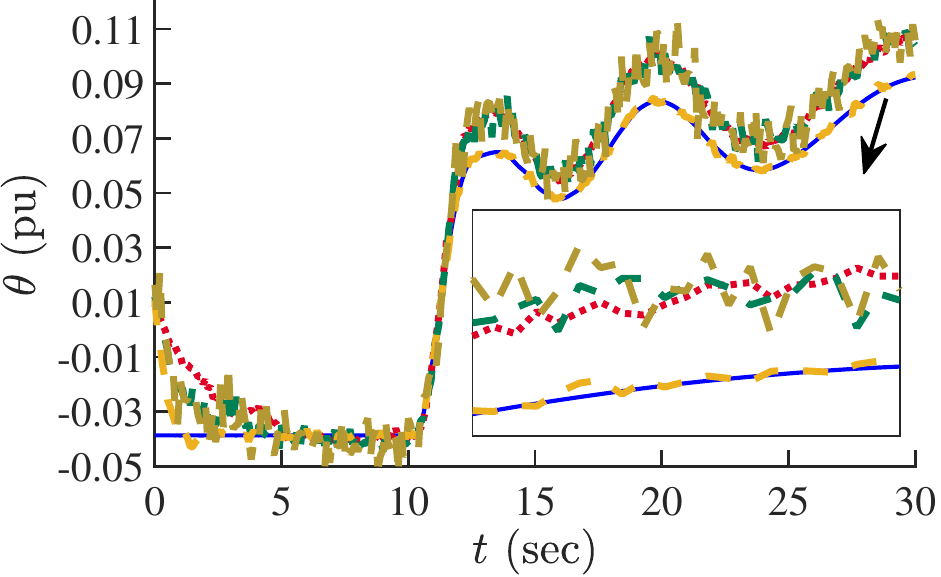}}{}{}\hspace{-0.1cm}\vspace{-0.3cm}
	\caption{\textcolor{black}{Comparison of estimates of Generator 3 frequency and Bus angle using LAV+EKF/UKF/HEKF and NDAE observer for Case 2.}}
	\label{fig:Comp_UI}
	%\vspace{-0.1cm}
\end{figure}
For the second case study, we can see from Fig. \ref{fig:Comp_UI} that both LAV+EKF and LAV+UKF perform poorly during dynamic response of the system. The reason for that is because after the fault has been introduced in the system, both EKF and UKF do not know the transient parts of the control inputs and are only aware of the steady state values of $\m T_{\mathrm{M}}$ and $\mE_{\mathrm{fd}}$ and are thus giving poor estimates. 

On the other hand for the NDAE observer although the observer does not have the accurate knowledge of the control inputs it is still able to give accurate estimation results.  These results can also be corroborated by the estimation error norm given in Fig. \ref{fig:Comp_error_norms_UI} from which we can see that the NDAE observer is driving the error norm near zero and has the smallest steady-state value as compared to the $2-$stage observers.
\begin{figure}
	\centering 
	{\includegraphics[keepaspectratio=true,scale=0.65]{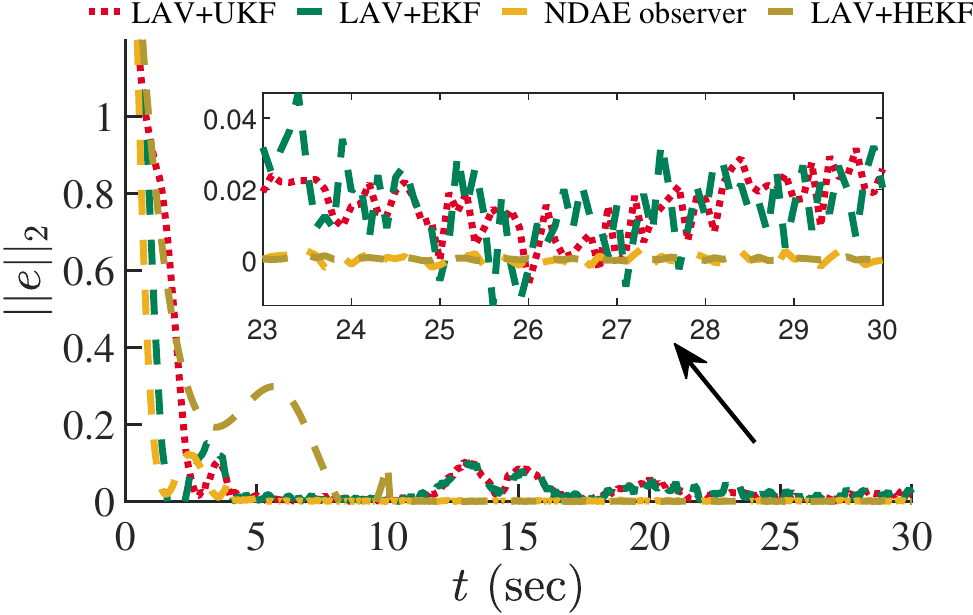}}{}{}\hspace{-0.1cm}%\vspace{-0.1cm}
	\caption{\textcolor{black}{Comparison of the error norm for LAV+EKF/UKF/HEKF and NDAE observer for Case 1.}}
	\label{fig:Comp_error_norms_Gauss}
	%\vspace{-0.1cm}
\end{figure} 

To further obtain a quantitative comparison between the LAV+EKF/UKF/HEKF and the proposed NDAE observer, root means square error (RMSE) and the overall computational time for all the observers has also been calculated and are presented in Tab. \ref{tab:comp_time}. The formula for RMSE is as follows
\begin{align}\label{eq:RMSE}
	\mathbf{RMSE} = \sum_{i=1}^n\sqrt{\dfrac{1}{t_f}\sum_{t=1}^{t_f}(x_i(t)-\hat{x}_i(t))^2}
\end{align}
\begin{figure}
	\centering 
	{\includegraphics[keepaspectratio=true,scale=0.65]{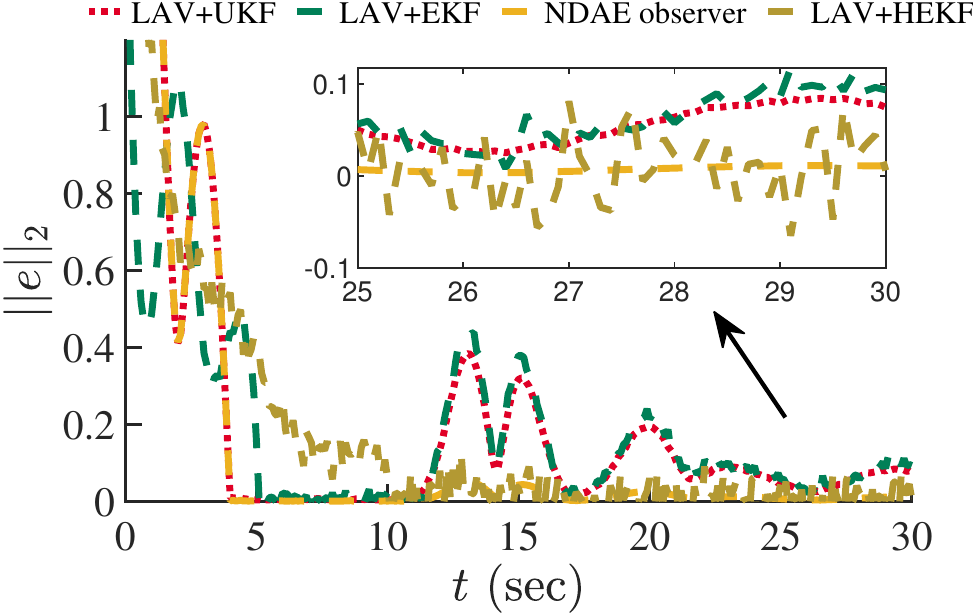}}{}{}\hspace{-0.1cm}%\vspace{-0.1cm}
	\caption{\textcolor{black}{Comparison of the error norm for LAV+EKF/UKF/HEKF and NDAE observer for Case 2.}}
	\label{fig:Comp_error_norms_UI}
	\vspace{-0.3cm}
\end{figure}
From Tab. \ref{tab:comp_time} we can see that the NDAE observer is way more computationally efficient as compared to LAV+EKF/UKF/HEKF because it has the least computation time in performing the DSE followed by LAV+UKF while LAV+EKF/HEKF are the least efficient. This is because the NDAE observer is not recursive as compared to the stochastic estimators and the observer gain matrix is fixed. While on the other hand in the decoupled two-step approaches state estimation for both algebraic and dynamic states has to be performed separately and thus increases the computational complexity. Also in the second stage EKF/HEKF evaluates Jacobian in each iteration and thus it further increases computational time  while UKF utilizes unscented transformation and is thus a bit computationally more efficient than EKF/HEKF. 
\begin{figure}
	\centering 
	{\includegraphics[keepaspectratio=true,scale=0.65]{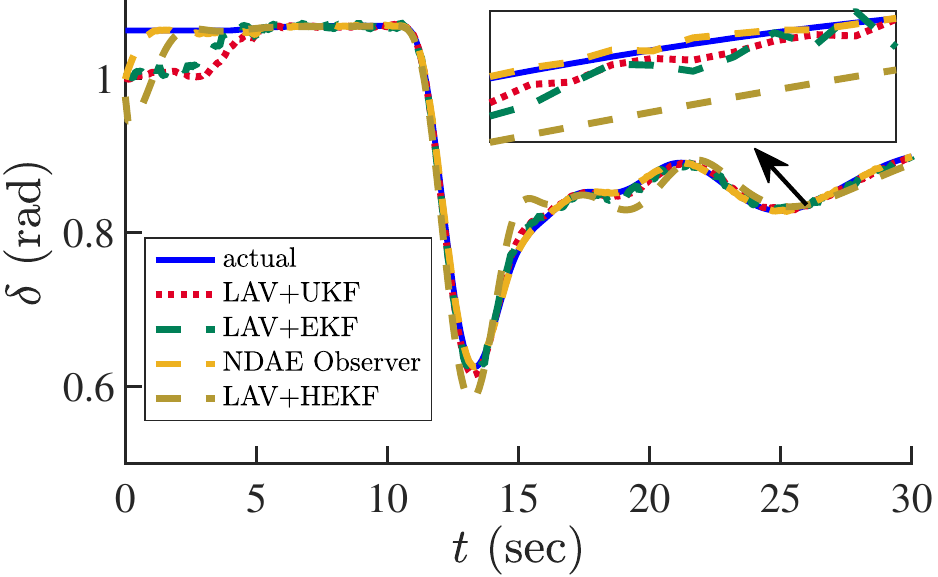}}{}{}\hspace{-0.1cm}\vspace{-0.1cm}
	\caption{\textcolor{black}{Estimation results of Generator 3 rotor angle using LAV+EKF/UKF/HEKF and NDAE observer for Case 1.}}
	\label{fig:Comp_Gauss_delta}
	\vspace{-0.3cm}
\end{figure} 
\vspace{-0.1cm}
\subsection{Extension to bigger system and $9^{th}$-order generator model}\label{section:bigger system}
%\vspace{-0.1cm}
\textcolor{black}{In the previous sections we showcase the performance of the proposed observer on a WECC 3-machines 9-bus system with a fourth-order generator model. The considered test power system in the earlier sections is relatively small and does not include the models of prime mover and generator's excitation system (exciter, stabilizer, and amplifier dynamics). For better accuracy and to fully capture the dynamics of synchronous generators after a fault, a higher-order ($6^{th}$ or $9^{th}$) generator model should be used.
With that in mind, and to further advocate the feasibility of the proposed NDAE observer for practical applications, here we use IEEE 10-machine 39-bus system with a standard $9^{th}$-order generator model to perform DSE \cite{sauer2017power}. The overall dynamical model now consists of IEEE Type DC1 excitation system and steam/hydro turbine dynamics as given in \cite{sauer2017power}.}  

\textcolor{black}{With that in mind, the overall states, control inputs, and output vectors for this model can be written as}
\textcolor{black}{\begin{align*}
	\hspace{-0.2cm}\m x \hspace{-0.1cm}&=\hspace{-0.1cm} \bmat{\m\delta^\top\,\,\m \omega^\top\,\,\m E'^\top_{q}\,\,\mE'^\top_{d}\,\,\m P_{v}^\top\,\,\m T_{\mathrm{M}}^\top\,\,\m E_{\mathrm{fd}}^\top\,\,\m r_{f}^\top\,\,\m v_{a}^\top\,\,\m V_{R}^\top\,\,\m V_{I}^\top\,\,\m I_R^\top\,\,\m I_I^\top}\\
	\m y \hspace{-0.1cm}&= \hspace{-0.1cm}\bmat{\m{\mr V}^\top&\m{\mr I}^\top}^\top, \m u = \bmat{\m V_{\mathrm{ref}}^\top & \m P_{v_{\mathrm{ref}}}^\top}
\end{align*}}
\textcolor{black}{where $\m P_{v}$ denotes steam/hydro valve position, $\m T_{\mathrm{M}}$ represents prime mover torque, $\m V_{R}, \m V_{I}, \m I_R, \m I_I$ are the real and imaginary parts of voltage and current phasors of all the buses, and $\m E_{\mathrm{fd}}$, $\m v_{a}$, and $\m r_{f}$ depicts field voltage, amplifier voltage, and stabilizer output respectively. Similarly $\m V_{\mathrm{ref}}$ and $\m P_{v_{\mathrm{ref}}}$ represent voltage reference point and valve position set points for the synchronous generator's.} 

\textcolor{black}{The simulation studies have been carried out before, during, and after the disturbances for $30s$. To ensure the observability of IEEE 39-bus system we follow \cite{ChakrabartiITPWRS2008} and deploy 13 PMUs on buses $[2,6,9,10,13,14,17,19,20,22,23,25,29]$. To calculate the NDAE observer gain matrix $\m L$, again we solve optimization problem $\mathbf{P_2}$ in YALMIP with MOSEK as solver. All the settings for MOSEK and \texttt{ode15i} solver are kept the same as discussed in the previous test case system. Two case studies are considered for this simulation test:} 
\begin{itemize}
\item \textcolor{black}{Case 4}: \textcolor{black}{For this case a disturbance has been applied after $3s$ and we impose Gaussian measurement noise having a diagonal covariance matrix and variance of $0.01^{2}$ and Gaussian process noise having also diagonal covariance matrix with each entry equal to $5\%$ of largest state change.}
	\item \textcolor{black}{Case 5}: \textcolor{black}{For this case study, right after $t>0$ we apply step disturbances (similar to as done in Case 3)  to the renewables and overall load demand of the system while also considering Gaussian process and measurement noise. The observer is only supplied with the steady-state values of loads/renewables and thus the observer is kept completely unaware of the disturbances in loads and renewables.}
\end{itemize}
\textcolor{black}{For both Case 4 and Case 5 the NDAE observer has been initialized from random initial conditions having $10\%$ maximum deviation from steady state values. After performing DSE we obtain the value of RMSE for Case 4 as 5.278, similarly for Case 5 we obtain 9.248. Furthermore, the estimation results are shown in Fig. \ref{fig:case4} and \ref{fig:case5} while the estimation error norm for both case studies is presented in Fig. \ref{fig:error ieee39}. We can clearly see that although the observer is completely unaware of disturbances in load and renewables and process/measurement noise, it can still drive the estimation error norm near zero and thus provide good estimation results.}
\begin{figure}[h]
	{\includegraphics[keepaspectratio=true,scale=0.507]{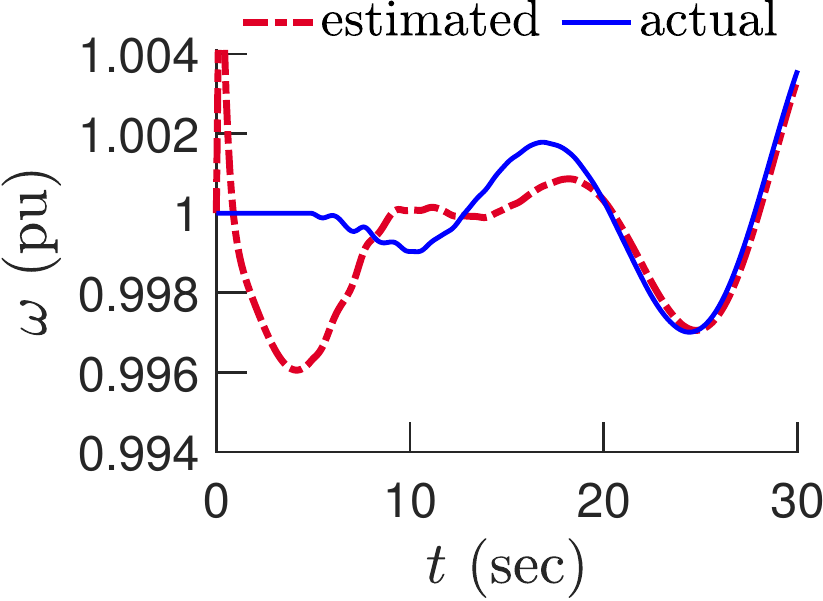}}{}
	{\includegraphics[keepaspectratio=true,scale=0.507]{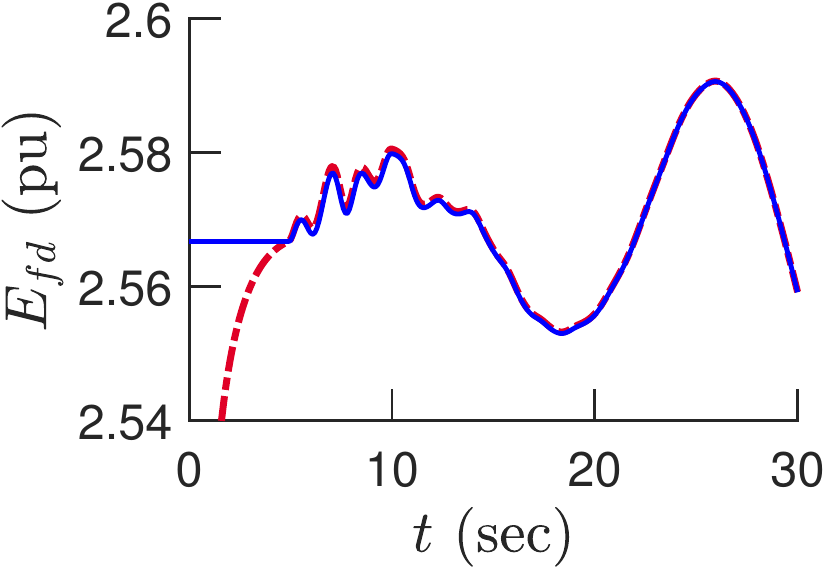}}{}\hspace{0.01cm}
	\caption{\textcolor{black}{Estimation results for IEEE 39-bus system considering Case 4, rotor speed and field voltage of Generator 4.}}\label{fig:case4}
\end{figure}

\begin{figure}[h]
	{\includegraphics[keepaspectratio=true,scale=0.507]{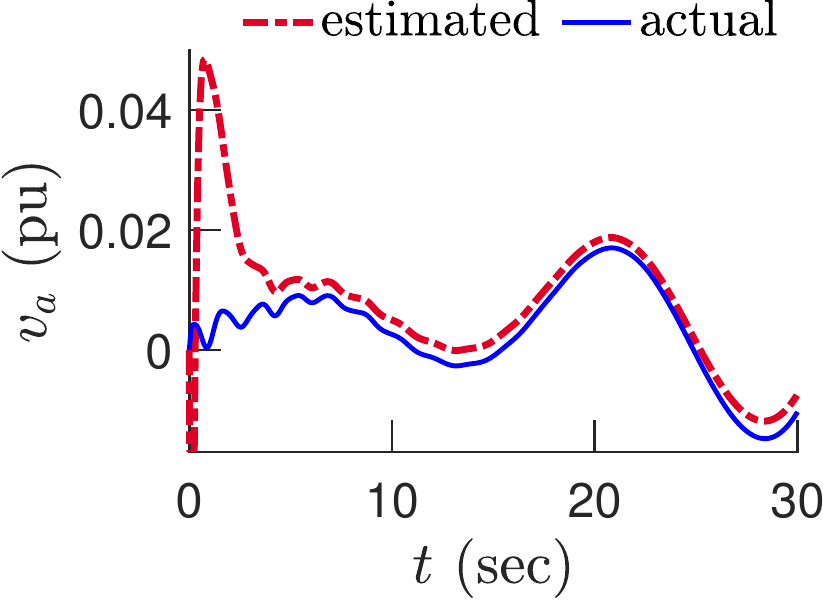}}{}\hspace{0.1cm}
	{\includegraphics[keepaspectratio=true,scale=0.507]{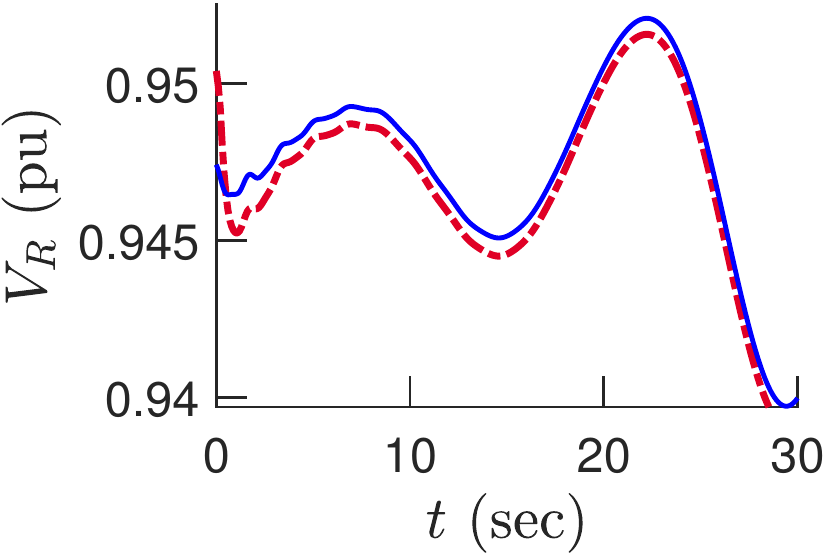}}{}
	\caption{\textcolor{black}{Estimation results for IEEE 39-bus system considering Case 5, amplifier voltage of Generator 5 and real voltage at Bus 19.}}\label{fig:case5}
\end{figure}

\begin{figure}[h]
	\centering 
	{\includegraphics[keepaspectratio=true,scale=0.55]{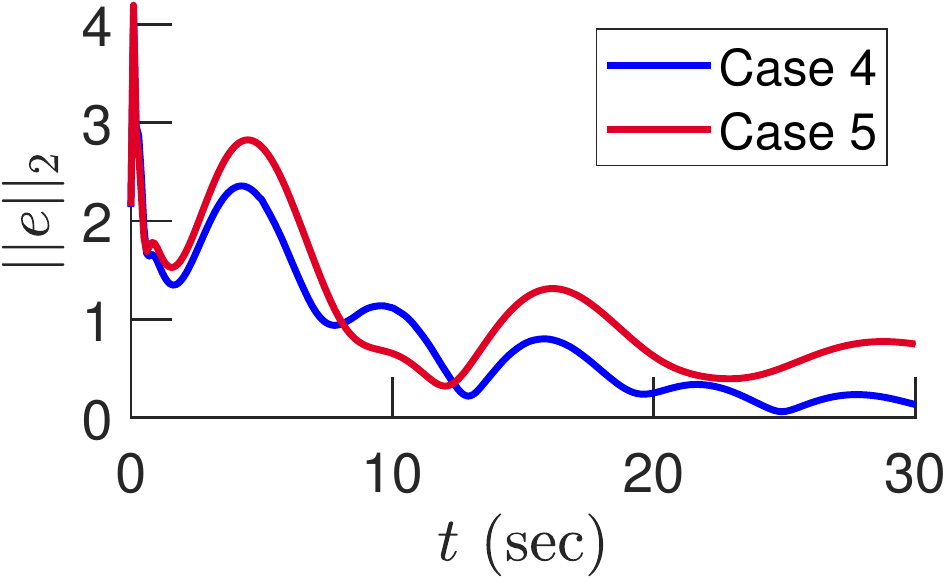}}{}{}\hspace{-0.1cm}%\vspace{-0.1cm}
	\caption{\textcolor{black}{Estimation error norm for Case 4 and Case 5.}}
	\label{fig:error ieee39}
	\vspace{-0.3cm}
\end{figure} 

\section{Conclusion paper limitations and future work}\label{section:conclusion}
%\vspace{-0.1cm}
In this paper, a novel estimator is proposed to simultaneously measure both algebraic and dynamic states of the NDAE model of power systems using PMU measurements. As compared to the observers present in literature the proposed observer: \textit{(1)} does not require any statistical properties of the disturbances, \textit{(2)} has no assumptions on PMUs locations and only require that the overall system should be observable, and \textit{(3)} is non-recursive and the observer gain remains fixed and thus more feasible for practical implementations. A thorough simulations studies showcase the performance of the observer under different dynamic conditions and with varying load and renewables. The simulations results show that the NDAE observer give better estimates of both algebraic and dynamic states and is computationally more efficient as compared to the 2-stage observers. 

\color{black}

Although the advantages of our NDAE observer are evident, the proposed methodology still has several limitations.  First, in this work we have considered renewables such as solar PV and wind farms as negative real power load and we assumed that the variations in renewables are only affecting the real power demand of the system. Further modeling of renewables is hence needed. Second, this work is based on centralized state estimations. Thus, to compute observer gain matrix $\m L$ for a very large power system the SDP solver can take more time to solve  $\mathbf{P_2}$. 
In particular, solving $\mathbf{P_2}$ for a large power system can be tedious and the solver may take more time to determine optimal gain matrix $\m L$. With that in mind, we point out here that $\mathbf{P_2}$ is solved \textit{offline} not online---the observer gain $\m L$ is \textit{time-invariant}. Once appropriate gain matrix is calculated it can be
	used offline and the observer can simultaneously estimate all
	the states of the power system in a few seconds (this is the time required by the ode15i solver to solve the observer dynamics) using a few PMU measurements. The proposed observer is far less
	computationally intensive because once the gain is determined
	the observer essentially works as a one-step predictor. Consequently, DSE
	can be performed with a significantly smaller number of matrix
	multiplications and hence lower computational time as demonstrated in the case studies.  

\normalcolor

 Third, the theory of the observer design is based on a continuous-time model of synchronous machine and PMU measurements, however, the measurements from PMUs are usually transmitted via a digitalized transmission system. Thus a discrete-time version of the proposed NDAE observer would be more suitable as compared to this one. To that end as a future research work, we plan to include the actual converter-based models of the renewable energy resources along with synchronous generator dynamics. Finally, the proposed NDAE observer will be extended to a discretized and decentralized framework to perform DSE for a very large power system.
\vspace{-0.1cm}
\bibliographystyle{IEEEtran} 
\bibliography{mybibfile}
\vspace{-0.1cm}
%\vspace{-0.9cm}
\appendices
\vspace{-0.3cm}
\section{PMU Measurement Model}\label{Appndx:pmu measurement}
\vspace{-0.3cm}
In this work we are considering a more realistic PMUs placement model and thus there is no requirement for the PMUs to be placed at a particular location. Instead, we are assuming that the PMUs are already placed optimally in the network to make sure the whole system is observable. With that in mind let us define $\mc{N}_P \in \mc{N} $ as the set of buses connected with PMUs, then PMU at bus $j\in \mc{N}_P$ measures voltage phasors of that bus, ${\m{\mr V}} = \bmat{\{v_{Rj}\}_{j\in \mathcal N} +\{v_{Ij}\}_{j\in \mathcal N}}$  and current phasors, $\m {\mr I} = \bmat{\{I_{Rji}\}_{i\in \mathcal N_j} +\{I_{Iji}\}_{i\in \mathcal N_j}}$of bus $j$ and nearby buses $(\mathcal N_j \in \mc N)$ connected with bus $j$. Note that we can write the current phasor $\m I$ in term of $\m{\mr v}$ by using the below formula  \cite{RisbudIPESISGT2016}
\begin{align}\label{eq:PMU1}
	\begin{split}
		\bmat{\m I_R\\\m I_I} = \bmat{\mr {Re} \m Y_{ft}&\mr {-Im} \m Y_{ft}\\\mr {Im} \m Y_{ft}& \mr {Re} \m Y_{ft} }\m {\mr v}^\top
	\end{split}
\end{align}
where $\m Y_{ft}$ is the branch admittance matrix and can be calculated as $\m Y_{ft}= \bmat{\m Y_{f}\\\m -Y_{t}}$. $\m Y_{t}$ and $\m Y_{f}$ are the $to$ and $from$ branch admittance matrices. These matrices can be easily extracted from load flow analysis in  MATPOWER\cite{ZimmermanITPWRS2011}. The total PMU measurement model can be written as  $ \m \tilde{\m y}=\tilde{\m C}\m x_a$\footnote{The bus voltages in the NDAE model \eqref{eq:nonlinearDAEexplicit-1} are in polar coordinates, which have been converted to rectangular coordinates for using formula \eqref{eq:PMU1} and to find PMU measurement outputs. MATLAB function $\mr pol2cart$ has been used for this purpose in simulations.} with $\tilde{\m C}$  as 
% \m S_n \mr Re \(\m Y_{ft}\)&\m -S_n \mr Im \m Y_{ft}; \{P_{Gi}\}
\begin{align}\label{eq:PMU_C_matrix}
	\begin{split}
		\tilde{\m C} =  \bmat{\m O&\m O&\m s_n&\m O\\\m O&\m O&\m O&\m s_n\\\m O&\m O& \m S_n \mr R_e \left(\m Y_{ft} \right)& \m {-S}_n \mr I_m \left(\m Y_{ft} \right) \\\m O&\m O& \m S_n \mr I_m \left(\m Y_{ft} \right) & \m S_n \mr R_e \left(\m Y_{ft} \right) }
	\end{split}
\end{align}
where $\m s_n \in \mc N \times 1$ is a binary selection vector which has $1$ only at those locations where PMUs are placed and zeroes otherwise. Similarly $ \m S_n \in |\mc N_j|\times|2\mathcal{E}|$ is a binary selection matrix which has $1$ only at those lines which are originating from buses connected with PMUs. Recall the structure of state variable $\m x$, then the PMUs measurement model in term of $\m x$ can be written as
\vspace{-0.2cm}
\begin{align}\label{eq:PMU_final_model}
	\m y = \underbrace{\bmat{\m O & \tilde{\m C}}}_{\m C} \bmat{\m x_d\\ \m x_a}, \m y = \m C\m x
\end{align}
where the vector $\m y\in\mbb{R}^p$ collects all the measurements from PMUs and $\mC\in\mbb{R}^{p\times n}$ is the overall output matrix.
\vspace{-0.1cm}
\section{Details of Matrices Used in NDAE Model \eqref{eq:nonlinearDAE}}\label{appdx:A}
\vspace{-0.1cm}
The matrix ${\m A}_d$ is created as
\begin{align*}
	\bmat{\m O &\m I & \m O & \m O \\ \m O & -\Diag\left(\m D \oslash \m M\right)& \m O & \m O \\\m O &\m O & -{\m A}_{d(3,3)} & \m O \\ \m O & {\m O} &\m O & -\Diag\left(\m 1\oslash \m T'_{\mr{q0}}\right)}
\end{align*}
where ${\m A}_{d(3,3)}$ is given as
\begin{align*}
	{\m A}_{d(3,3)} &= \Diag\left(\m x_{\mr{d}}\oslash \left(\m x_{\mr{d}}'\odot \m T'_{\mr{d0}}\right) \right)
\end{align*}
similarly ${\m F}_d$ is specified as
\begin{align*}
	{\m F}_d & = \bmat{\m O&\m O& \m O \\-\Diag\left(\m 1\oslash \m M\right)&\m O & \m O\\ \m O& \m G_{d(3,2)} &\m O\\ \m O&\m O&\m G_{d(4,3)}} \\
\end{align*}
where $ \m F_d(3,2)$ and $ \m F_d(4,3)$ is given as
\begin{align*}
\m F_{d(3,2)} = \Diag\left((\m x_{\mr{d}}-\m x_{\mr{d}}')\oslash \left(\m x_{\mr{d}}'\odot \m T'_{\mr{d0}}\right) \right) \\
\m F_{d(4,3)} = \Diag\left((\m x_{\mr{q}}-\m x_{\mr{q}}')\oslash \left(\m x_{\mr{q}}'\odot \m T'_{\mr{q0}}\right) \right) 
\end{align*}
The function ${\m f}_d(\cdot)$ and ${\m f}_a(\cdot)$ in \eqref{eq:nonlinearDAE-1} and \eqref{eq:nonlinearDAE-2} are given as
\begin{align*}
	{\m f}_d\left({\m x}\right) = \bmat{\m P_{\mr{G}}\\ \{v_i\cos(\delta_{i}-\theta_i)\}_{i\in \mc{G}} \\\{v_i\sin(\delta_{i}-\theta_i)\}_{i\in \mc{G}}}
\end{align*}
\begin{align*}
	{\m f}_a\hspace{-0.05cm}\left({\m x}\right) \hspace{-0.075cm} =\hspace{-0.075cm} \bmat{\{E'_{qi}v_i\sin(\delta_i-\theta_i)\}_{i\in \mc{G}} \\\{v_i^2\sin(2(\delta_i-\theta_i))\}_{i\in \mc{G}} \\\{E'_{qi}v_i\cos(\delta_i-\theta_i)\}_{i\in \mc{G}} \\ \{v_i^2\}_{i\in \mc{G}} \\ \{v_i^2\cos(2(\delta_i-\theta_i))\}_{i\in \mc{G}} \\ \left\{ \hspace{-0.05cm}\sum_{j=1}^{N}\hspace{-0.05cm} v_iv_j\hspace{-0.05cm}\left(G_{ij}\cos \theta_{ij} \hspace{-0.05cm}+ \hspace{-0.05cm}B_{ij}\sin \theta_{ij}\right)\right\}_{i\in \mc{G}\hspace{-0.05cm}} \\ \left\{\hspace{-0.05cm}\sum_{j=1}^{N}\hspace{-0.05cm} v_iv_j\hspace{-0.05cm}\left(G_{ij}\sin \theta_{ij} \hspace{-0.05cm}- \hspace{-0.05cm}B_{ij}\cos \theta_{ij}\right)\right\}_{i\in \mc{G}\hspace{-0.05cm}} \\
		\left\{ \hspace{-0.05cm}\sum_{j=1}^{N}\hspace{-0.05cm} v_iv_j\hspace{-0.05cm}\left(G_{ij}\cos \theta_{ij} \hspace{-0.05cm}+ \hspace{-0.05cm}B_{ij}\sin \theta_{ij}\right)\right\}_{i\in \mc{N}\setminus\mc{G}\hspace{-0.05cm}} \\ \left\{\hspace{-0.05cm}\sum_{j=1}^{N}\hspace{-0.05cm} v_iv_j\hspace{-0.05cm}\left(G_{ij}\sin \theta_{ij} \hspace{-0.05cm}- \hspace{-0.05cm}B_{ij}\cos \theta_{ij}\right)\right\}_{i\in \mc{N}\setminus\mc{G}\hspace{-0.05cm}}}\hspace{-0.1cm}
\end{align*} 
Next, the matrix $\m F_a = \Blkdiag\left(\tilde{\m F}_{a},\m I\right)$ where $\tilde{\m F}_{a}$ is given as
\begin{align*}
	\tilde{\m F}_a & = \bmat{\Diag\left(\m 1\oslash \m x_{\mr{d}}'\right)&\m O\\\tilde{\m F}_{a(1,2)}&\m O\\\m O&\Diag\left(\m 1\oslash \m x_{\mr{d}}'\right)\\\m O&-\tilde{\m F}_{a(2,4)}\\ \m O & \tilde{\m F}_{a(2,5)}}^\top
\end{align*}
with $\tilde{\m F}_{a(1,2)} = \Diag\left((\m x_{\mr{d}}'-\m x_{\mr{q}})\oslash \left(2\m x_{\mr{d}}'\odot \m x_{\mr{q}}\right) \right)$,  $ \tilde{\m F}_{a(2,4)} = \Diag\left((\m x_{\mr{d}}'+\m x_{\mr{q}})\oslash \left(2\m x_{\mr{d}}'\odot \m x_{\mr{q}}\right) \right)$ and $ \tilde{\m F}_{a(2,5)} = \tilde{\m F}_{a(1,2)}$. 

The matrices $\mA_a$ and $\mB_a$ are given as
\begin{align*}
	\m A_a = \bmat{-\m I&\m O\\ {\m A}_p&\m O},\;\m B_a = \bmat{\m O\\ {\m B}_p}
\end{align*} 
where ${\m A}_p = \bmat{-\m I \;\;\;\m O}^\top$ and ${\m B}_p$ is a binary selection matrix having $1's$ at those location where buses are connected to loads and/or renewables.  Finally, the matrices $\m B_d$ and $\m h$ are constructed as follows
\begin{align*}
	{\m B}_d &= \bmat{\m O&\m O\\\m O&\Diag\left(\m 1\oslash \m M\right)\\\Diag\left(\m 1\oslash \m T'_{{\mr{d0}}}\right)&\m O\\\m O&\m O}, {\m h} =\bmat{-\m 1 \\ \m D \oslash \m M \\ \m O \\ \m O}.
\end{align*}
\end{document}